\definecolor{blue}{RGB}{0,50,200}
\definecolor{magenta}{RGB}{255,0,255}
\title{A 3.3904-Competitive Online Algorithm for List Update with Uniform Costs}
\author{Mateusz Basiak}{University of Wrocław}{}{https://orcid.org/0009-0009-0210-6451}{}
\author{Marcin Bienkowski}{University of Wrocław}{}{https://orcid.org/0000-0002-2453-7772}{}
\author{Martin Böhm}{University of Wrocław}{}{https://orcid.org/0000-0003-4796-7422}{}
\author{Marek Chrobak}{University of California, Riverside}{}{https://orcid.org/0000-0002-8673-2709}{}
\author{Łukasz Jeż}{University of Wrocław}{}{https://orcid.org/0000-0002-7375-0641}{}
\author{Jiří Sgall}{Charles University}{}{https://orcid.org/0000-0003-3658-4848}{}
\author{Agnieszka Tatarczuk}{University of Wrocław}{}{https://orcid.org/0009-0008-3849-9165}{}
\authorrunning{M. Basiak, M. Bienkowski, M. Böhm, M. Chrobak, Ł. Jeż, J. Sgall and A. Tatarczuk} 
\keywords{List update, work functions, amortized analysis, online algorithms, competitive analysis} 
\newtheorem{assumption}[theorem]{Assumption}
\newtheorem{compresult}[theorem]{Computational Result}
\crefname{assumption}{Assumption}{Assumptions}
\crefname{fact}{Fact}{Facts}
\newcommand{\etal}{\emph{et al.\,}}
\newcommand{\probLUPstandard}{\ensuremath{\textsf{LUP}_\mathrm{S}}\xspace}
\newcommand{\probLUPuniform}{\ensuremath{\textsf{LUP}_1}\xspace}
\newcommand{\OPT}{\textsl{\textsc{Opt}}\xspace}
\newcommand{\ALG}{\textsl{\textsc{Fpm}}\xspace}
\newcommand{\ALGLONG}{\textsl{\textsc{Full-Or-Partial-Move}}\xspace}
\newcommand{\MTF}{\textsl{\textsc{Mtf}}\xspace}
\newcommand{\MTFLONG}{\textsl{\textsc{Move-To-Front}}\xspace}
\newcommand{\AALG}{\textsl{\textsc{Alg}}\xspace}
\newcommand{\BIT}{\textsl{\textsc{Bit}}\xspace}
\newcommand{\DBIT}{\textsl{\textsc{Dbit}}\xspace}
\newcommand{\FRONT}{\textsl{\textsc{Front}}\xspace}
\newcommand{\COUNTER}{\textsl{\textsc{Counter}}\xspace}
\newcommand{\RR}{\textsl{\textsc{RandomReset}}\xspace}
\newcommand{\TimeStamp}{\textsl{\textsc{TimeStamp}}\xspace}
\renewcommand{\P}{\mathcal{P}}
\newcommand{\Q}{\mathcal{Q}}
\newcommand{\set}[1]{\{#1\}}
\newcommand{\G}[1]{\ensuremath{G_\mathrm{#1}}\xspace}
\newcommand{\half}{\textstyle{\frac{1}{2}}}
\newcommand{\oneeighth}{\textstyle{\frac{1}{8}}}
\newcommand{\onesixteenth}{\textstyle{\frac{1}{16}}}
\newcommand{\thirteenfourths}{\textstyle{\frac{13}{4}}}
\newcommand{\braced}[1]{{ \left\{ {#1} \right\} }}
\begin{document}

\maketitle


\begin{abstract}
We consider the List Update problem where the cost of each swap is assumed to
be~1. This is in contrast to the ``standard'' model, in which an algorithm is
allowed to swap the requested item with previous items for free. We construct an online
algorithm \textsc{Full-Or-Partial-Move} (\ALG), whose competitive ratio is at most 
$3.3904$, improving over the previous best known bound of $4$.	
\end{abstract}


\section{Introduction}
\label{sec:introduction}

\subparagraph*{The List Update problem.}
In the online \emph{List Update
problem}~\cite{2008_albers_lup_EfoA,2008_kamali_lup_EofA,2013_kamali_lopez-ortiz_lup_survey},
the objective is to maintain a set of items stored in a linear list in response
to a sequence of access requests. The cost of accessing a requested item is
equal to its distance from the front of the list. After each request, an
algorithm is allowed to rearrange the list by performing an arbitrary number of
swaps of adjacent items. In the model introduced by Sleator and Tarjan in their
seminal 1985 paper on competitive
analysis~\cite{1985_sleator_tarjan_lup_paging}, an algorithm can repeatedly swap
the requested item with its preceding item at no cost. These swaps are called
\emph{free}. All other swaps are called \emph{paid} and  have cost $1$ each. 
As in other problems involving self-organizing data
structures~\cite{2005_albers_westbrook_self_organizing}, 
the goal is to construct an \emph{online} algorithm, i.e., 
operating without the knowledge of future requests.
The cost of such an algorithm is compared to the cost of 
the optimal \emph{offline} algorithm; the ratio of the two costs is called
the \emph{competitive ratio} and is subject to minimization.

Sleator and Tarjan proved that the algorithm \MTFLONG ($\MTF$), which after each
request moves the requested item to the front of the list, is
$2$-competitive~\cite{1985_sleator_tarjan_lup_paging}. This ratio is known to be
optimal if the number of items is unbounded. Their work was the culmination of
previous extensive studies of list updating, including experimental results,
probabilistic approaches, and earlier attempts at amortized analysis
(see~\cite{1985_bentley_mcgeoch_self_organizing_heuristics} and the references
therein).

As shown in subsequent work, $\MTF$ is not unique --- there are other strategies
that achieve ratio $2$, such as \TimeStamp~\cite{1998_albers_improved_lup} or
algorithms based on work functions~\cite{1999_anderson_lup_work_function}. In fact, 
there are infinitely many algorithms that achieve ratio
$2$~\cite{2002_bachrach_etal_theory_and_practice_lup,2013_kamali_lopez-ortiz_lup_survey}.


\subparagraph*{The uniform cost model.}
Following~\cite{1994_reingold_etal_randomized_lup}, we will refer to the cost
model of~\cite{1985_sleator_tarjan_lup_paging} as \emph{standard}, and we will
denote it here by $\probLUPstandard$. This model has been questioned in the
literature for not accurately reflecting true costs in some
implementations~\cite{1994_reingold_etal_randomized_lup,2000_munro_competitiveness_linear_search,2000_martinez_roura_competitiveness_mtf,2012_golynski_etal_lup_mrm_cost_model},
with the concept of free swaps being one of the main concerns.

A natural approach to address this concern, considered in some later studies
(see,
e.g.,~\cite{1994_reingold_etal_randomized_lup,2005_albers_westbrook_self_organizing,2013_kamali_lopez-ortiz_lup_survey,2020_albers_janke_new_bounds_randomized,2024_azar_etal_lup_delays_time_windows}),
is simply to charge cost $1$ for \emph{any} swap. We will call it here the
\emph{uniform cost model} and denote it $\probLUPuniform$. A general lower bound
of~$3$ on the competitive ratio of deterministic algorithms for
$\probLUPuniform$ was given by
Reingold~\etal~\cite{1994_reingold_etal_randomized_lup}. Changing the cost of
free swaps from $0$ to $1$ at most doubles the cost of any algorithm, so $\MTF$
is no worse than $4$-competitive for $\probLUPuniform$. Surprisingly, no
algorithm is known to beat \MTF, i.e., achieve ratio lower than
$4$.\footnote{The authors of~\cite{2013_kamali_lopez-ortiz_lup_survey} claimed
an upper bound of $3$ for $\probLUPuniform$, but later discovered that their
proof was not correct (personal communication with S.~Kamali).}.


\subparagraph*{Our main result.}
To address this open problem, we develop an online algorithm \ALGLONG (\ALG) for
$\probLUPuniform$ with competitive ratio $\oneeighth \!\cdot\! (23+\sqrt{17})
\approx 3.3904$, significantly improving the previous upper bound of $4$. 

Our algorithm $\ALG$ remains $3.3904$-competitive even for the \emph{partial
cost} function, where the cost of accessing location $\ell$ is $\ell-1$, instead
of the \emph{full cost} of~$\ell$ used in the original definition of List
Update~\cite{1985_sleator_tarjan_lup_paging}. Both functions have been used in
the literature, depending on context and convenience. For any online algorithm,
its partial-cost competitive ratio is at least as large as its full-cost ratio,
although the difference typically vanishes when the list size is unbounded. We
present our analysis in terms of partial costs.

\ALG remains $3.3904$-competitive also in the 
dynamic scenario of List Update that allows operations of insertions and
deletions, as in the original definition in~\cite{1985_sleator_tarjan_lup_paging}
(see~\autoref{sec:dynamic}).


\subparagraph*{Technical challenges and new ideas.}
The question whether ratio $3$ can be achieved remains open. We also do not know if
there is a \emph{simple} algorithm with ratio below $4$.
We have considered some natural adaptions of \MTF and other algorithms that are $2$-competitive for $\probLUPstandard$,
but for all we were able to show lower bounds higher than $3$ for $\probLUPuniform$.

As earlier mentioned, $\MTF$ is $4$-competitive for $\probLUPuniform$. It is also easy to
show that its ratio is not better than $4$: repeatedly request the last item
in the list. Ignoring additive constants, the algorithm pays twice the length of
the list at each step, while any algorithm that just keeps the list in a fixed order, 
pays only half the length on the average.

The intuition why $\MTF$ performs poorly is that it moves the requested items to front 
``too quickly''. For the aforementioned adversarial strategy against \MTF,
ratio $3$ can be obtained by moving the items to front only
\emph{every other time} they are requested. This algorithm, called \DBIT, is a
deterministic variant of algorithm \BIT from~\cite{1994_reingold_etal_randomized_lup} and 
a~special case of algorithm
\COUNTER in~\cite{1994_reingold_etal_randomized_lup}, and it
has been also considered in~\cite{2013_kamali_lopez-ortiz_lup_survey}. 
In~\autoref{sec:lower}, we show that \DBIT is not
better than $4$-competitive in the partial cost model, 
and not better than $3.302$-competitive in the full cost model. 

One can generalize these approaches by considering a more general
class of algorithms that either leave the requested item at its current location or move it to the 
front. In~\autoref{sec:lower} we show that such a strategy cannot achieve ratio better than $3.25$, 
even for just three items. 
A naive fix would be, for example, to always move the requested item half-way towards
front. This algorithm is even worse: its competitive ratio is at least $6$ (see~\autoref{sec:lower}).

We also show (see~\autoref{sec:discussion}) via a computer-aided argument that, for $\probLUPuniform$,
the work function algorithm's 
competitive ratio is larger than $3$, even for lists of length $5$. This is in contrast to the
its performance for the standard model, where it achieves optimal ratio $2$~\cite{1999_anderson_lup_work_function}.

Our algorithm $\ALG$ overcomes the difficulties mentioned above by combining a few new ideas.
The first idea is a more sophisticated choice of the target location for the requested item. 
That is, aside from \emph{full moves} that move the requested items to the list front,
\ALG sometimes performs a \emph{partial move} to a suitably chosen target
location in the list. This location roughly corresponds to the front of the list when this item 
was requested earlier.

The second idea is to keep track, for each pair of
items, of the work function for the two-item subsequence consisting of these
items. These work functions are used in two ways. First, they roughly indicate
which relative order between the items in each pair is ``more likely'' in
an optimal solution. The algorithm uses this information to decide whether 
to perform a full move or a partial move. 
Second, the simple sum of all these pair-based
work functions is a lower bound on the optimal cost, 
which is useful in analyzing the competitive ratio of $\ALG$. 


\subparagraph*{Related work.}
Better bounds are known for randomized algorithms both in the standard model
($\probLUPstandard$) and the uniform cost model ($\probLUPuniform$). For
$\probLUPstandard$, a long line of research culminated in a~$1.6$-competitive
algorithm~\cite{1991_irani_two_results_lup,1994_reingold_etal_randomized_lup,1998_albers_improved_lup,1995_albers_etal_lup_combined_bit_timestamp},
and  a $1.5$-lower bound~\cite{1993_teia_lower_bound_randomized_lup}. The upper
bound of~$1.6$ is tight in the class of so-called projective algorithms, whose
computation is uniquely determined by their behavior on two-item
instances~\cite{2013_ambuhl_lower_bounds_projective_lup}. For $\probLUPuniform$,
the ratio is known to be between
$1.5$~\cite{2020_albers_janke_new_bounds_randomized} and
$2.64$~\cite{1994_reingold_etal_randomized_lup}.

It is possible to generalize the uniform cost function by distinguishing between
the cost of $1$ for following a link during search and the cost of $d\ge 1$ for
a~swap~\cite{1994_reingold_etal_randomized_lup,2020_albers_janke_new_bounds_randomized}
This model is sometimes called the $P^d$ model; in this terminology, our
$\probLUPuniform$ corresponds to $P^1$. While $\MTF$ is $4$-competitive for
$P^1$, it does not generalize in an obvious way to $d>1$. Other known
algorithms for the $P^d$ model (randomized and deterministic \COUNTER, \RR
and \TimeStamp) have bounds on competitive ratios that monotonically decrease with
growing~$d$~\cite{1994_reingold_etal_randomized_lup,2020_albers_janke_new_bounds_randomized}.
In particular, deterministic \COUNTER achieves ratio $4.56$ when $d$ tends to
infinity~\cite{2005_albers_westbrook_self_organizing} and for the same setting
($d \to \infty$) a recent result by Albers and
Janke~\cite{2020_albers_janke_new_bounds_randomized} shows a randomized
algorithm \TimeStamp that is $2.24$-competitive.

A variety of List Update variants have been investigated in the
literature over the last forty years, including models with
lookahead~\cite{1998_albers_lup_lookahead}, locality of
reference~\cite{2008_angelopoulos_etal_lup_locality,
2016_albers_lauer_lup_locality}, parameterized
approach~\cite{2015_dorrigiv_etal_lup_parameterized}, algorithms with
advice~\cite{2016_boyar_etal_lup_advice},
prediction~\cite{2025_azar_predictions}, or alternative cost
models~\cite{2012_golynski_etal_lup_mrm_cost_model,2000_martinez_roura_competitiveness_mtf,2000_munro_competitiveness_linear_search}.
A~particularly interesting model was proposed recently by
Azar~\etal~\cite{2024_azar_etal_lup_delays_time_windows}, where an~online
algorithm is allowed to postpone serving some requests, but is either required
to serve them by a specified deadline or pay a delay penalty.  

In summary, List Update is one of canonical problems in the area of competitive analysis, used
to experiment with refined models of competitive analysis or to study the effects of
additional features. This underscores the need to fully resolve the remaining open questions 
regarding its basic variants, including the question whether ratio $3$ is attainable for $\probLUPuniform$.


\section{Preliminaries}\label{sec:preliminaries}


\subparagraph*{Model.}
An algorithm has to maintain a list of items,
while a sequence $\sigma$ of access requests is presented in an online manner.
In each step $t\ge 1$, the algorithm is presented an access request $\sigma^t$ to an item in the list.
If this item is in a location $\ell$, the algorithm incurs cost $\ell-1$ to access it.
(The locations in the list are indexed $1,2,...$.)
Afterwards,
the algorithm may change the list configuration by performing an arbitrary number 
of swaps of neighboring items, each of cost $1$.

For any algorithm $A$, we denote its cost for processing a sequence $\sigma$ by $A(\sigma)$.
The optimal algorithm is denoted by $\OPT$.


\subparagraph*{Notation.}
Let $\P$ be the set of all unordered pairs of items. For a pair $\set{x,y}
\in \P$, we use the notation $x \prec y$ ($x \succ y$) to denote that $x$ is
before (after) $y$ in the list of an online algorithm. 
(The relative order of $x,y$ may change over time, but
it will be always clear from context what step of the computation we are referring to.)
We use $x \preceq y$ ($x \succeq y$) 
to denote that $x \prec y$ ($x \succ y$) or $x = y$.

For an input $\sigma$ and a pair $\set{x,y} \in \P$, $\sigma_{xy}$ is the
subsequence of $\sigma$ restricted to requests to items $x$ and $y$ only.
Whenever we say that an algorithm serves input $\sigma_{xy}$, we mean that
it has to maintain a list of two items, $x$ and $y$. 


\subsection{Work Functions}

\subparagraph*{Work functions on item pairs.}
For each prefix $\sigma$ of the input sequence, an online algorithm may compute
a so-called \emph{work function $W^{xy}$}, where $W^{xy}(xy)$ (or $W^{xy}(yx)$) 
is the optimal cost of the solution that serves $\sigma_{xy}$ and ends with the list in
configuration~$xy$ (or $yx$). 
(Function $W^{xy}$ also has prefix $\sigma$ as an argument. 
Its value will be always uniquely determined from context.)
The values of $W$ for each step can be computed iteratively 
using straightforward dynamic programming. 
Note that the values of $W$ are non-negative integers and $|W^{xy}(xy) - W^{xy}(yx)| \leq 1$.


\subparagraph*{Modes.}
For a pair $\set{x,y} \in \P$, we define its \emph{mode}
depending on the value of the work function $W^{xy}$ in the current step and the 
mutual relation of $x$ and $y$ in the list of an online algorithm. 
In the following definition we assume that~$y \prec x$.
\begin{itemize}
    \item Pair $\set{x,y}$ is in mode $\alpha$ if $W^{xy}(yx) + 1 = W^{xy}(xy)$. 
    \item Pair $\set{x,y}$ is in mode $\beta$ if $W^{xy}(yx) = W^{xy}(xy)$.
    \item Pair $\set{x,y}$ is in mode $\gamma$ if $W^{xy}(yx) - 1 = W^{xy}(xy)$.
\end{itemize}

For an illustration of work function evolution and associated modes, see
\autoref{fig:work_function}.

\begin{figure}[t]
\centering
\includegraphics[width=0.8\textwidth]{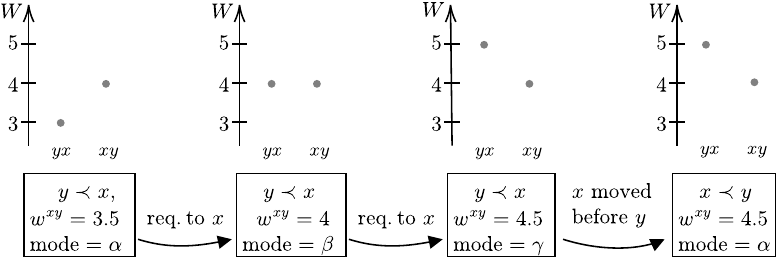}
\caption{The evolution of the work function $W^{xy}$. 
Initially, $W^{xy}$ has its minimum in the state $yx$ and 
$y \prec x$ (in the list of an online algorithm). Thus, the mode of the pair~$\set{x,y}$ is $\alpha$.
Next, because of the two requests to $x$, the value of $W^{xy}(yx)$ is incremented,
while the value at $xy$ remains intact. The mode is thus changed from $\alpha$ to $\beta$
and then to $\gamma$. Finally, when an algorithm moves item $x$ before $y$, 
the mode of the pair $\set{x,y}$ changes to $\alpha$. The value of $w^{xy}$ 
(the average of $W^{xy}(xy)$ and $W^{xy}(yx)$) increases by $\half$ whenever the mode changes due to a request.}
\label{fig:work_function}
\end{figure}

If a pair $\set{x,y}$ is in mode $\alpha$, then the minimum of the work
function $W^{xy}$ is at configuration~$yx$, i.e., the one that has $y$
before~$x$. That is, an online algorithm keeps these items in a way that
``agrees'' with the work function. Note that $\alpha$ is the initial mode
of all pairs.
Conversely, if a pair $\set{x,y}$ is in state $\gamma$, then the minimum of the
work function $W^{xy}$ is at configuration~$xy$. In this case, an online
algorithm keeps these items in a way that ``disagrees'' with the work
function. 


\subsection{Lower Bound on OPT}
\label{sec:list_factoring}

Now we show how to use the changes in the work functions of item
pairs to provide a useful lower bound on the cost of an optimal algorithm. The
following lemma is a~standard and straightforward result of the list partitioning
technique~\cite{1999_anderson_lup_work_function}.\footnote{There are known input
sequences on which the relation of \autoref{lem:pairwise_opt_vs_opt} is not
tight~\cite{1999_anderson_lup_work_function}.}

\begin{lemma}
\label{lem:pairwise_opt_vs_opt}
For every input sequence $\sigma$, it holds that 
$\sum_{\set{x,y} \in \P} \OPT(\sigma_{xy}) \leq \OPT(\sigma)$.
\end{lemma}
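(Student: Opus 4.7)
The plan is to apply the standard list partitioning argument. Fix any optimal schedule $\mathcal{O}$ for $\sigma$ and, for every pair $\set{x,y} \in \P$, construct an induced schedule $\mathcal{O}_{xy}$ for the two-item input $\sigma_{xy}$ whose list is, at every step, the projection of $\mathcal{O}$'s list onto $\set{x,y}$. Concretely, $\mathcal{O}_{xy}$ serves each request from $\sigma_{xy}$ in lockstep with $\mathcal{O}$, and performs a swap exactly when $\mathcal{O}$ swaps $x$ with $y$ as adjacent items; all other swaps of $\mathcal{O}$ are ignored. Since any swap that does not involve both $x$ and $y$ leaves their relative order unchanged, the invariant that $\mathcal{O}_{xy}$'s list equals the projection of $\mathcal{O}$'s list onto $\set{x,y}$ is preserved throughout.

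The heart of the argument is to show that $\sum_{\set{x,y} \in \P} \text{cost}(\mathcal{O}_{xy}) = \OPT(\sigma)$, i.e.\ that the costs decompose exactly pair-by-pair. For access costs, when $\mathcal{O}$ serves a request to item $z$ at location $\ell$ it pays $\ell-1 = \sum_{w \neq z}[w \prec z]$; in the induced schedules, $\mathcal{O}_{zw}$ pays exactly $[w \prec z]$ for that request (the partial access cost of $z$ in a two-item list with $w$), and pairs disjoint from $\set{z}$ contribute nothing, so summing over $w \neq z$ recovers $\ell-1$. For swap costs, each unit-cost swap of adjacent items $u,v$ by $\mathcal{O}$ triggers exactly one unit-cost swap in $\mathcal{O}_{uv}$ and no action in any other induced schedule. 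Adding the access and swap contributions over all timesteps gives the claimed equality.

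To conclude, observe that $\mathcal{O}_{xy}$ is a valid offline solution for $\sigma_{xy}$, hence $\text{cost}(\mathcal{O}_{xy}) \geq \OPT(\sigma_{xy})$; summing over pairs yields
\[
\sum_{\set{x,y} \in \P} \OPT(\sigma_{xy}) \;\leq\; \sum_{\set{x,y} \in \P} \text{cost}(\mathcal{O}_{xy}) \;=\; \OPT(\sigma).
\]
The main point requiring care is the access-cost accounting, which relies crucially on the partial-cost convention ($\ell-1$ rather than $\ell$) adopted in \autoref{sec:preliminaries}: under full costs each access would be over-counted by $n-2$ upon summation over pairs, which is precisely why the paper works with partial costs throughout. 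Everything else is straightforward bookkeeping.
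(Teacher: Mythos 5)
Your proof is correct and is exactly the standard list-partitioning argument that the paper invokes by citation (to Anderson et al.) rather than spelling out. You project the optimal schedule onto each pair, observe that partial access costs and swap costs decompose exactly so that $\sum_{\{x,y\}}\mathrm{cost}(\mathcal{O}_{xy}) = \OPT(\sigma)$, and then lower-bound each projected cost by $\OPT(\sigma_{xy})$; your remark about why the partial-cost convention is essential for the clean decomposition is also on point.
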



\subparagraph*{Averaging work functions.}
We define the function $w^{xy}$ as the average value of the work function $W^{xy}$,
i.e.,
\begin{equation*}
w^{xy} \triangleq \half \cdot (\,  W^{xy}(xy) + W^{xy}(yx)\, ).
\end{equation*}
We use $w_t^{xy}$ to denote the value of $w^{xy}$ after serving the first $t$
requests of $\sigma$, and define $\Delta_t w^{xy} \triangleq w_t^{xy} -
w_{t-1}^{xy}$. The growth of $w^{xy}$ can be related to $\OPT(\sigma)$ in
the following way.

\begin{lemma}
\label{lem:wf_vs_opt}
For a sequence $\sigma$ consisting of $T$ requests, it holds that
$
    \sum_{t=1}^T \sum_{\set{x,y} \in \P} \Delta_t w^{xy} \leq \OPT(\sigma).
$
\end{lemma}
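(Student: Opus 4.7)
The plan is to telescope the double sum over $t$ and then combine a pointwise inequality with \autoref{lem:pairwise_opt_vs_opt}. By definition $\Delta_t w^{xy} = w_t^{xy} - w_{t-1}^{xy}$, so the inner sum telescopes to $w_T^{xy} - w_0^{xy}$, and it suffices to establish
\[
\sum_{\{x,y\}\in\P} \parend{w_T^{xy} - w_0^{xy}} \;\leq\; \OPT(\sigma).
\]

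The main step is the pointwise bound $w_T^{xy} - w_0^{xy} \leq \OPT(\sigma_{xy})$ for every pair $\{x,y\}$. Fix $\{x,y\}$ and write $A = W_T^{xy}(xy)$ and $B = W_T^{xy}(yx)$, assuming without loss of generality that $A \leq B$. By the definition of the work function, $\OPT(\sigma_{xy}) = \min(A,B) = A$, and the remark that $|W^{xy}(xy) - W^{xy}(yx)| \leq 1$ gives
\[
w_T^{xy} \;=\; \frac{A+B}{2} \;\leq\; A + \half \;=\; \OPT(\sigma_{xy}) + \half.
\]
On the other hand, every pair initially lies in mode $\alpha$, which forces the two values of $W_0^{xy}$ to differ by exactly $1$ and hence $w_0^{xy} = \half$. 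Subtracting produces the claimed pointwise bound.

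Summing over all pairs and invoking \autoref{lem:pairwise_opt_vs_opt} closes the argument:
\[
\sum_{\{x,y\}\in\P} \parend{w_T^{xy} - w_0^{xy}} \;\leq\; \sum_{\{x,y\}\in\P} \OPT(\sigma_{xy}) \;\leq\; \OPT(\sigma).
\]

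The only slightly delicate point is properly accounting for the initial value: if one carelessly took $w_0^{xy} = 0$, the per-pair half-unit slack would accumulate to a $|\P|/2$ additive loss that the lemma does not allow. Everything else is a routine consequence of the structural facts $|W^{xy}(xy) - W^{xy}(yx)| \leq 1$ and the $\alpha$-mode initialization that are already stated in the excerpt.
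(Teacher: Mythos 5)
Your proof is correct and follows exactly the paper's approach: telescope $\sum_t \Delta_t w^{xy}$ to $w_T^{xy} - w_0^{xy}$, bound this per pair by $\OPT(\sigma_{xy})$ using $w_0^{xy} = \half$ and $w_T^{xy} \leq \OPT(\sigma_{xy}) + \half$, then sum over pairs and apply \autoref{lem:pairwise_opt_vs_opt}. The only difference is that you spell out why $w_0^{xy} = \half$ and $w_T^{xy} \leq \OPT(\sigma_{xy}) + \half$, which the paper simply asserts.
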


\begin{proof}
We first fix a pair $\set{x,y} \in \P$.
We have $w_0^{xy} = \half$ and $w_T^{xy} \leq \OPT(\sigma_{xy}) + \half$.
Hence, 
$
    \sum_{t=1}^T \Delta_t w^{xy} = w_T^{xy} - w_0^{xy} \leq \OPT(\sigma_{xy}).
$
The proof follows by summing over all pairs $\set{x,y} \in \P$ and invoking 
\autoref{lem:pairwise_opt_vs_opt}.
\end{proof}


\subparagraph*{Pair-based OPT.}
\autoref{lem:wf_vs_opt} gives us a convenient tool to lower bound $\OPT(\sigma)$. We
define the cost of \emph{pair-based \OPT} in step $t$ as $\sum_{\set{x,y} \in \P}
\Delta_t w^{xy}$. For a given request sequence $\sigma$,
the sum of these costs over all steps 
is a lower bound on the actual value of $\OPT(\sigma)$.

On the other hand, we can express $\Delta_t w^{xy}$ (and thus also the pair-based
\OPT) in terms of the changes of the modes of item pairs. See \autoref{fig:work_function}
for an illustration.

\begin{observation}
\label{obs:type_change_causes_cost}
    If a pair $\set{x,y}$ changes its mode due to the request in step $t$ then 
    $\Delta_t w^{xy} = \half$, otherwise $\Delta_t w^{xy} = 0$.
\end{observation}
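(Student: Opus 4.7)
The key point is that the work function $W^{xy}$, and hence $w^{xy}$, depends only on the two-item subsequence $\sigma_{xy}$. So if $\sigma^t\notin\set{x,y}$, then $\sigma_{xy}$ is unchanged in this step; thus $\Delta_t w^{xy}=0$ and no request-induced mode change occurs. (The online algorithm might still swap $x$ and $y$ outside of processing this request, but that affects neither $W^{xy}$ nor what one should call a \emph{request-induced} mode change.) It therefore suffices to consider $\sigma^t\in\set{x,y}$, which, under the standing assumption $y\prec x$, splits into the two subcases $\sigma^t=x$ and $\sigma^t=y$.

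My plan is to invoke the standard two-configuration work-function recurrence for a list of two items (partial cost, unit swap cost). After a request to $z\in\set{x,y}$, the recurrence combined with the invariant $|W^{xy}(xy)-W^{xy}(yx)|\le 1$ yields a clean update: the value at the configuration in which $z$ sits in front is unchanged, while the value at the other configuration becomes $\min\set{W^{xy}(xy),W^{xy}(yx)}+1$. Consequently, at most one of the two values can change, and when it does, it increases by exactly $1$, which gives $\Delta_t w^{xy}=\half$; in the remaining case $\Delta_t w^{xy}=0$.

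It then remains to align this update with the definition of modes. Taking $\sigma^t=x$ for concreteness: in mode $\alpha$ (where $W^{xy}(yx)<W^{xy}(xy)$) and in mode $\beta$ (where the two values are equal), $W^{xy}(yx)$ strictly increases, shifting the mode along the chain $\alpha\to\beta\to\gamma$; in mode $\gamma$, $W^{xy}(yx)$ already exceeds the minimum, so the update leaves it unchanged and the mode is preserved. The three subcases for $\sigma^t=y$ are symmetric, giving shifts $\gamma\to\beta\to\alpha$ (with $\alpha$ preserved). In each of the six subcases, the mode changes exactly when the work function changes, which is exactly when $\Delta_t w^{xy}=\half$.

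There is no real obstacle — the statement reduces to a finite case check on three modes and two possible requests, driven by the two-configuration DP and the $\pm 1$ invariant. The one subtlety worth flagging is exactly that the observation concerns \emph{request-induced} mode changes: an online algorithm may also swap $x$ and $y$ between requests, flipping the mode but leaving $W^{xy}$ and hence $w^{xy}$ untouched, which is consistent with $\Delta_t w^{xy}=0$ in such steps.
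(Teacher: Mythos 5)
Your proof is correct, and it takes the natural route. The paper itself states this as an unproved Observation (with only \autoref{fig:work_function} as illustration), so there is no official proof to compare against; what you supply is precisely the argument the authors implicitly rely on. In particular, the update rule you cite is right: after a request to $z\in\set{x,y}$, the work-function value at the configuration with $z$ in front is unchanged, while the other value becomes $\min\set{W^{xy}(xy),W^{xy}(yx)}+1$, so by the invariant $|W^{xy}(xy)-W^{xy}(yx)|\le 1$ exactly one value increases (by $1$) when the mode is $\alpha$ or $\beta$ with $z$ the rear item, or $\beta$ or $\gamma$ with $z$ the front item, and neither changes otherwise. This matches the transitions recorded in \autoref{obs:type_change} and gives $\Delta_t w^{xy}\in\set{0,\half}$ with $\half$ iff the request shifts the mode. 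Your flag about the word \emph{due to the request} is also the right reading: a subsequent swap by \ALG can flip the mode while leaving $w^{xy}$ fixed, and the observation is phrased to exclude exactly that case.
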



\section{Algorithm Full-Or-Partial-Move}
\label{sec:algorithm}

For each item $x$, \ALG keeps track of an item denoted $\theta_x$ and called the \emph{target} 
of $x$. At the beginning, \ALG sets $\theta_x = x$ for all $x$. 
Furthermore, at each time, \ALG ensures that $\theta_x \preceq x$.


\smallskip

The rest of this section describes the overall strategy of algorithm \ALG. Our description is top-down, and proceeds in three steps:
\begin{itemize}
\item First we describe, in broad terms, what actions are involved in serving a request,
	including the choice of a move and the principle behind updating target items.
\item Next, we define the concept of states associated with item pairs and their potentials.
\item Finally, we explain how algorithm \ALG uses these potential values to decide how to adjust the
	list after serving the request.
\end{itemize}

This description will fully specify how \ALG works, providing that the potential function on the states is given.
Thus, for any choice of the potential function the competitive ratio of \ALG is well-defined.
What remains is to choose these potential values to optimize the competitive ratio.
This is accomplished by the analysis in~\autoref{sec:analysis} that follows.


\subparagraph*{Serving a request.}
Whenever an item $z^*$ is requested, \ALG performs the following three operations,
in this order:
\begin{description}
    \item \emph{1. Target cleanup.} If $z^*$ was a target of
    another item $y$ (i.e.,~$\theta_y = z^*$ for $y \neq z^*$), then $\theta_y$~is
    updated to the successor of $z^*$. This happens for all items $y$ with  this property.
    \item \emph{2. Movement of $z^*$.} 
        \ALG executes one of the two actions: a \emph{partial move} or a \emph{full move}. 
        We will explain how to choose between them later.
        \begin{itemize}
            \item In the partial move, item $z^*$ is inserted right before $\theta_{z^*}$.  
			(If $\theta_{z^*}=z^*$, this means that $z^*$ does not change its position.)
            \item In the full move, item $z^*$ is moved to the front of the list.
        \end{itemize}    
    \item \emph{3. Target reset.} 
        $\theta_{z^*}$ is set to the front item of the list.
\end{description}
It is illustrative to note a few properties and corner cases of the algorithm. 
\begin{itemize}
    \item Target cleanup is executed only for items following $z^*$, and thus 
        the successor of~$z^*$ exists then (i.e., \ALG is well defined).
    \item If $\theta_{z^*} = z^*$ and a partial move is executed, then $z^*$ is not moved, but the items 
        that targeted $z^*$ now target the successor of $z^*$.
    \item For an item $x$, the items that precede $\theta_x$ in the list
        were requested (each at least once) after the last time $x$ had been requested.
\end{itemize}


\subparagraph*{Modes, flavors and states.}
Fix a pair $\set{x,y}$ such that $y \prec x$, and thus also $\theta_y \preceq y
\prec x$. This pair is assigned one of four possible \emph{flavors}, 
depending on the position of $\theta_x$ (cf.~\autoref{fig:flavors}):
\begin{itemize}
    \item flavor $d$ (\textbf{d}isjoint): if $\theta_y \preceq y \prec \theta_x \preceq x$,
    \item flavor $o$ (\textbf{o}verlapping): if $\theta_y \prec \theta_x \preceq y \prec x$,
    \item flavor $e$ (\textbf{e}qual): if $\theta_y = \theta_x \preceq y \prec x$,
    \item flavor $n$ (\textbf{n}ested): if $\theta_x \prec \theta_y \preceq y \prec x$ 
\end{itemize}

\begin{figure}[t]
\centering
\includegraphics[width=0.75\textwidth]{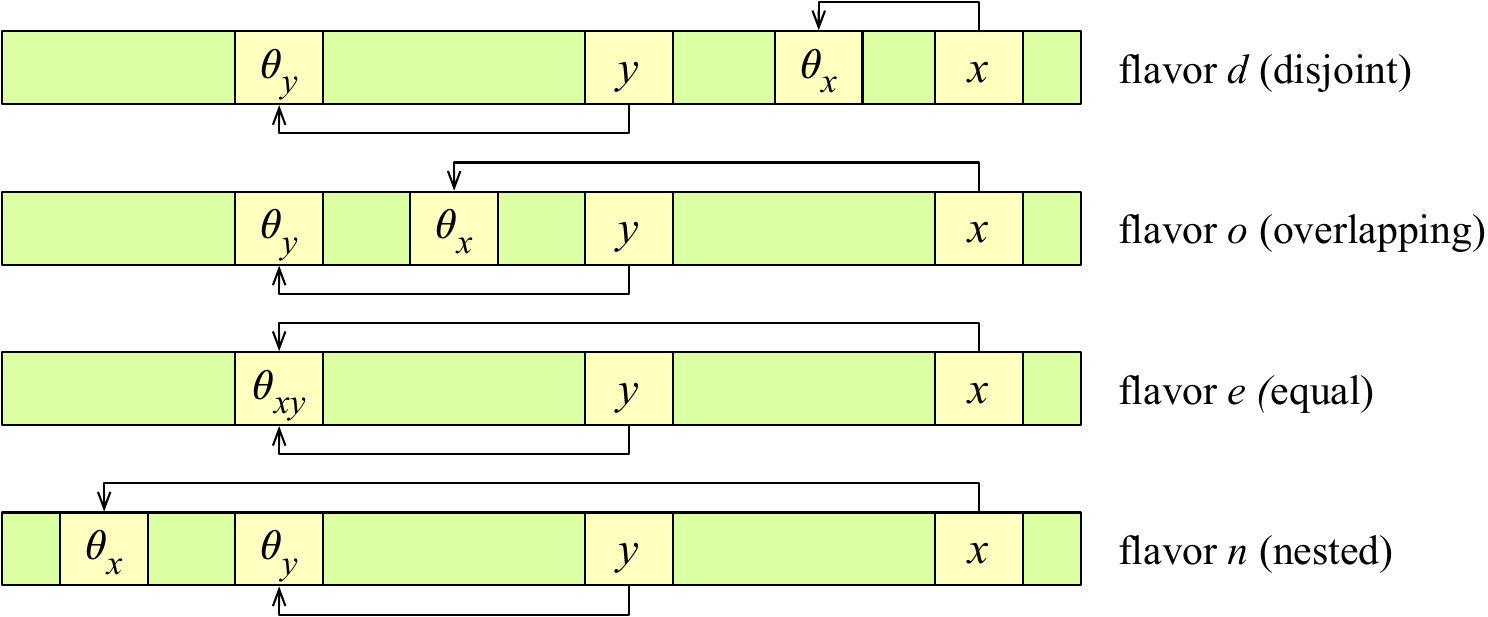}
\caption{The flavor of a pair $\set{x, y}$ (where $y \prec x$) depends on the position of
target $\theta_x$ with respect to items $\theta_y \preceq y \prec x$. In the figure for flavor $e$,
we write $\theta_{xy}$ for the item $\theta_x = \theta_y$.
}
\label{fig:flavors}
\end{figure}

For a pair $\set{x,y}$ that is in a mode $\xi \in \{ \alpha, \beta, \gamma
\}$ and has a flavor $\omega \in \{d,o,e,n\}$, we say that the \emph{state} of
$\set{x,y}$ is $\xi^\omega$. Recall that all pairs are initially in
mode $\alpha$, and note that their initial flavor is $d$. That is, the initial state of all pairs is
$\alpha^d$.

We will sometimes combine the flavors $o$ with $e$, stating that the pair is in
state $\xi^{oe}$ if its mode is $\xi$ and its flavor is $o$ or $e$.
Similarly, we will also combine flavors $n$ and $e$.


\subparagraph*{Pair potential.}
To each pair $\set{x,y}$ of items, we assign a non-negative \emph{pair potential}
$\Phi_{xy}$. We abuse the notation and use $\xi^\omega$ not only to denote
the pair state, but also the corresponding values of the pair potential. That
is, we assign the potential $\Phi_{xy} = \xi^\omega$ if pair $\set{x,y}$ is in
state~$\xi^\omega$. We pick the actual values of these potentials only later in
\autoref{sec:choosing_parameters}.

We emphasize that the states of each item pair depend on work functions for this
pair that are easily computable in online manner. Thus, \ALG may compute the
current values of~$\Phi_{xy}$, and also compute how their values would change for
particular choice of a move. 


\subparagraph*{Choosing the cheaper move.}
For a step $t$, let $\Delta_t \ALG$ be the cost of \ALG in this step, and 
for a pair $\set{x,y} \in \P$, let $\Delta_t \Phi_{xy}$ be the change of the
potential of pair $\set{x,y}$ in step~$t$. Let $z^*$ denote the requested
item. \ALG chooses the move (full or partial) with the smaller value of
\[
    \Delta_t \ALG + \sum_{y \prec z^*} \Delta_t \Phi_{z^*y},
\] 
breaking ties arbitrarily. Importantly,
note that the value that \ALG minimizes involves only pairs including the 
requested item $z^*$ and items currently preceding it.



\section{Analysis of Full-Or-Partial-Move}
\label{sec:analysis}


In this section, we show that for a suitable choice of pair potentials, 
\ALG is $3.3904$-competitive. 

To this end, we first make a few observations concerning how the modes, flavors
(and thus states) of item pairs change due to the particular actions of \ALG.
These are summarized in \autoref{tab:state_transitions_1},
\autoref{tab:state_transitions_2}, and \autoref{tab:state_transitions_3}, and
proved in \autoref{sec:mode_transitions} and \autoref{sec:state_transitions}.

Next, in \autoref{sec:amortized_analysis} and \autoref{sec:choosing_parameters},
we show how these changes influence the amortized cost of \ALG pertaining to particular
pairs. We show that for a suitable choice of pair potentials, 
we can directly compare the amortized cost of \ALG with the cost of \OPT.


\subsection{Structural Properties}

Note that requests to items other than $x$ and $y$ do not affect
the mutual relation between $x$ and $y$. Moreover, to some extent, they preserve
the relation between targets $\theta_x$ and~$\theta_y$,
as stated in the following lemma.

\begin{lemma}
\label{lem:theta_relation_preserved} 
    Fix a pair $\set{x,y} \in \P$ and assume that $\theta_y \preceq \theta_x$
    (resp.~$\theta_y = \theta_x$). If \ALG serves a~request to an~item~$z^*$
    different than $x$ and $y$, then these relations are preserved. The relation
    $\theta_y \prec \theta_x$ changes into $\theta_y = \theta_x$ only when
    $\theta_y$ and $\theta_x$ are adjacent and $z^* = \theta_y$.
\end{lemma}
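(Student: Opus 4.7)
The plan is to trace the three phases of serving $z^*$ and to argue that only phase~1 (target cleanup) can meaningfully change the pair $(\theta_x,\theta_y)$ and its relative order. The first observation is that phase~3 updates only $\theta_{z^*}$, so since $z^* \notin \set{x,y}$ by assumption it leaves $\theta_x$ and $\theta_y$ untouched. Phase~2 does not change the value of any target, and it preserves the relative list order of every pair of items different from $z^*$. After phase~1 has replaced any target equal to $z^*$ by the successor of $z^*$, both $\theta_x$ and $\theta_y$ are distinct from $z^*$, so phase~2 does not alter their relative list position either. It therefore suffices to compare the (possibly updated) target values at their positions in the list at the \emph{start} of the step.

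Next I would do a small case analysis on which of $\theta_x,\theta_y$, if either, equals $z^*$ before phase~1. If neither does, phase~1 changes nothing and both assertions hold trivially. If both equal $z^*$, then phase~1 sends both to the successor of $z^*$, so the equality $\theta_y=\theta_x$ is maintained. If only $\theta_x=z^*$, then $\theta_y \neq z^*$ combined with $\theta_y \preceq \theta_x$ forces $\theta_y \prec z^*$, hence $\theta_y \prec z^* \prec \mathrm{succ}(z^*) = \theta_x^{\mathrm{new}}$; thus $\theta_y \prec \theta_x$ remains strict after the step. Symmetrically, if only $\theta_y=z^*$, then $\theta_x\neq z^*$ and $z^* = \theta_y \preceq \theta_x$ give $z^* \prec \theta_x$, so $\mathrm{succ}(z^*) \preceq \theta_x$; hence $\theta_y^{\mathrm{new}} = \mathrm{succ}(z^*) \preceq \theta_x = \theta_x^{\mathrm{new}}$, preserving $\theta_y \preceq \theta_x$. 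The ``$\theta_y = \theta_x$'' part of the claim then follows, since under the assumption $\theta_y=\theta_x$ only the first two subcases can arise, and both preserve equality.

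For the last assertion, I would observe that the only case above in which a strict inequality $\theta_y\prec\theta_x$ can collapse into $\theta_y=\theta_x$ is the fourth one, and exactly when $\mathrm{succ}(z^*)=\theta_x$. This is precisely the situation in which $z^* = \theta_y$ and $\theta_y, \theta_x$ are adjacent in the list, matching the statement. The main obstacle, if any, is purely the bookkeeping of which list snapshot each comparison refers to; the reduction in the first paragraph handles this cleanly, since the relative order of $\theta_x^{\mathrm{new}}$ and $\theta_y^{\mathrm{new}}$ at the end of the step coincides with their relative order in the list at the start of the step.
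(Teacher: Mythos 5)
Your proof is correct and follows essentially the same route as the paper's: reduce to the effect of the target cleanup on $\theta_x,\theta_y$ (the other two phases demonstrably change neither the targets nor their relative order once both targets differ from $z^*$), then case on which of $\theta_x,\theta_y$ equals $z^*$. Your opening paragraph makes explicit a bookkeeping point about phases 2 and 3 that the paper leaves implicit; otherwise the case analysis and conclusions match.
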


\begin{proof}
    By the definition of \ALG, the targets of non-requested items are updated
    (during the target cleanup) only if they are equal to $z^*$. In such a case,
    they are updated to the successor of $z^*$. We consider several cases.
    \begin{itemize}
        \item If $z^* \notin \set{\theta_x, \theta_y}$, the targets $\theta_x$ and $\theta_y$ are not updated.
        \item If $z^* = \theta_y = \theta_x$, both targets are updated to the successor of $z^*$, and thus they 
            remain equal.
        \item If $z^* = \theta_y \prec \theta_x$, target $\theta_y$ is updated. If $\theta_y$ and 
            $\theta_x$ were adjacent ($\theta_y$ was an immediate 
            predecessor of $\theta_x$), they become equal.
            In either case, $\theta_y \preceq \theta_x$. 
        \item If $\theta_y \prec \theta_x = z^*$, target $\theta_x$ is updated, 
            in which case the relation $\theta_y \prec \theta_x$ is preserved.
        \qedhere
    \end{itemize}
\end{proof}


\subsection{Mode Transitions}
\label{sec:mode_transitions}
Now, we focus on the changes of modes. It is convenient to look first at
how they are affected by the request itself (which induces an update of the work
function), and subsequently due to the actions of \ALG (when some items are
swapped). The changes are summarized in the following observation.

\begin{observation}
\label{obs:type_change}
Let $z^*$ be the requested item. 
\begin{itemize}
    \item Fix an item $y \succ z^*$. The mode transitions of the pair 
        $\set{z^*,y}$ due to request are $\alpha \to \alpha$, $\beta \to \alpha$, and $\gamma \to
        \beta$. Subsequent movement of $z^*$ does not further change the mode.
    \item Fix an item $y \prec z^*$. Due to the request, pair $\set{z^*,y}$
        changes first its mode due to the request in the following way:
        $\alpha \to \beta$, $\beta \to \gamma$, and $\gamma \to \gamma$.
        Afterwards, if \ALG moves $z^*$ before $y$, the subsequent mode
        transitions for pair $\set{z^*,y}$ are $\beta \to \beta$ and $\gamma \to \alpha$.
\end{itemize}
\end{observation}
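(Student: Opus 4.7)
The plan is to decompose the effect of the step into two parts: (i) the update of the work function $W^{z^* y}$ caused by the new request, and (ii) the possible reshuffling of \ALG's list caused by the ensuing move of $z^*$. Since the values of $W^{z^*y}$ depend only on the request sequence and not on \ALG's list order, these two effects can be analyzed independently.

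For (i), recall that under the naming convention $y \prec x$, the modes $\alpha, \beta, \gamma$ correspond exactly to the signed difference $\delta^{xy} := W^{xy}(xy) - W^{xy}(yx)$ taking the values $+1, 0, -1$ respectively. For a two-item list, the work function obeys the Bellman recurrence
\[
    W^{xy}_{\text{new}}(c) \;=\; \min\bigl(\,W^{xy}_{\text{old}}(c) + a_{z^*}(c),\; W^{xy}_{\text{old}}(c') + 1\,\bigr),
\]
where $c' \neq c$ is the other configuration and $a_{z^*}(c) \in \{0,1\}$ is the partial access cost of $z^*$ in $c$. A routine six-case analysis (three starting values of $\delta^{xy}$, times two possibilities for whether $z^*$ plays the role of $x$ or of $y$) yields the updated value of $\delta^{xy}$, which can then be read off as exactly one of the claimed mode transitions.

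For (ii), every move (partial or full) carries $z^*$ strictly forward in \ALG's list, swapping it only with items that currently precede it. Consequently, if $y \succ z^*$, the pair's relative order cannot change and the mode is unchanged; this closes the first bullet of the observation. If $y \prec z^*$, the move either leaves the order intact (mode unchanged) or moves $z^*$ past $y$. In the latter case the roles of $x$ and $y$ swap under the naming convention, which negates $\delta^{xy}$; hence the effect on modes is $\alpha \leftrightarrow \gamma$ and $\beta \to \beta$. Composing this with the post-request modes $\beta, \gamma, \gamma$ obtained in step (i) produces exactly $\beta \to \beta$ and $\gamma \to \alpha$, as promised in the second bullet.

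The main obstacle is bookkeeping rather than mathematics: the convention $y \prec x$ embedded in the mode definitions makes the labels $x$ and $y$ asymmetric, so whenever a swap in \ALG's list occurs one must re-identify which item plays which role. Phrasing everything in terms of the signed quantity $\delta^{xy}$ and observing that order reversal negates it, rather than tracking modes through a relabeling step, sidesteps this source of confusion cleanly.
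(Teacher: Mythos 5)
The paper states this as an Observation and gives no proof, so there is nothing to compare against; your argument supplies the verification that the paper leaves implicit, and both ingredients — updating the work function by the two-item Bellman recurrence, then flipping the sign of $\delta^{xy}$ under a swap because the $y\prec x$ naming convention is reversed — are correct and natural. One small technical nit: the recurrence you write, $W^{xy}_{\text{new}}(c)=\min\bigl(W^{xy}_{\text{old}}(c)+a_{z^*}(c),\,W^{xy}_{\text{old}}(c')+1\bigr)$, drops the access cost in the second branch (it should read $W^{xy}_{\text{old}}(c')+a_{z^*}(c')+1$, or with $a_{z^*}(c)$ if you swap before serving). Because $|W^{xy}(xy)-W^{xy}(yx)|\le 1$, the omitted term never changes which branch is the minimizer, so all six case outcomes come out the same; still, you should state the recurrence precisely (or note explicitly why the omission is harmless) before invoking the case analysis.
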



\subsection{State Transitions}
\label{sec:state_transitions}

Throughout this section, we fix a step and let~$z^*$ be the requested item. 

We analyze the potential changes for both types of movements. We split our considerations
into three cases corresponding to
three types of item pairs. The first two types involve $z^*$ as one pair item,
where the second item either initially precedes $z^*$ 
(cf.~\autoref{lem:state_transitions_1}) or follows $z^*$ 
(cf.~\autoref{lem:state_transitions_2}). The third type involves pairs 
that do not contain $z^*$ at all (cf.~\autoref{lem:state_transitions_3}).

While we defined 12 possible states (3 modes $\times$ 4 flavors), 
we will show that $\alpha^n$, $\gamma^d$, and~$\gamma^o$ never occur. 
This clearly holds at the very beginning as all pairs are then in
state~$\alpha^d$.

For succinctness, we also combine some of the remaining states, reducing the number of states
to the following six: 
$\alpha^d$, $\beta^d$, $\alpha^{oe}$, $\beta^o$, $\beta^{ne}$ and $\gamma^{ne}$.
(For example, a pair is in state $\alpha^{oe}$ if it is in state $\alpha^{o}$ or $\alpha^{e}$.)
In the following we analyze the transitions between them. We start with a
simple observation.


\begin{lemma}
    \label{lem:full_move}
    Fix an item $y \neq z^*$. If \ALG performs a full move, then 
    the resulting flavor of the pair~$\set{z^*,y}$ is $d$.
\end{lemma}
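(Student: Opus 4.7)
The plan is to work directly from the flavor definition, carefully tracking the order of operations (target cleanup, full move, target reset) to show each of the three defining inequalities for flavor $d$.

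First, I would set up which roles the items play in the flavor definition. Since \ALG performs a full move, $z^*$ is placed at the front of the list, so $z^* \prec y$ for every $y \neq z^*$. The flavor definition was phrased for a pair $\set{x,y}$ with $y \prec x$, so I need to apply it with $z^*$ cast in the role of the predecessor and $y$ cast in the role of the successor. Under this correspondence, flavor $d$ amounts to the chain
\[
\theta_{z^*} \;\preceq\; z^* \;\prec\; \theta_y \;\preceq\; y.
\]

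Second, I would verify the three inequalities one at a time. The leftmost, $\theta_{z^*} \preceq z^*$, is immediate from step~3 (target reset): $\theta_{z^*}$ is set to the front item of the list, and after the full move the front item is exactly $z^*$, so in fact $\theta_{z^*} = z^*$. The rightmost inequality, $\theta_y \preceq y$, is an invariant maintained by \ALG that holds at all times. The middle inequality $z^* \prec \theta_y$ is the only one requiring real argument, and here the order of the three sub-operations matters: target cleanup happens \emph{before} the actual move. By the definition of target cleanup, if $\theta_y$ was equal to $z^*$ before the request, it is updated to the successor of $z^*$ in the pre-move list; otherwise $\theta_y$ is left unchanged. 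In either case, after target cleanup $\theta_y \neq z^*$. Then the full move places $z^*$ at position $1$ while merely shifting all other items; since $\theta_y$ is some item distinct from $z^*$, it ends up at a position strictly after $z^*$, i.e., $z^* \prec \theta_y$.

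Combining the three inequalities gives $\theta_{z^*} \preceq z^* \prec \theta_y \preceq y$, which is precisely flavor $d$ for the pair $\set{z^*, y\}$, completing the argument. The only step that requires any care is the middle inequality; the main obstacle is purely bookkeeping, namely to notice that target cleanup eliminates the one scenario ($\theta_y = z^*$) that would otherwise prevent $\theta_y$ from being strictly behind $z^*$ after the full move.
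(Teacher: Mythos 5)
Your proof is correct and follows essentially the same approach as the paper: both establish $\theta_{z^*}=z^*$ via target reset, and both handle the potential obstacle $\theta_y=z^*$ by noting that target cleanup (which precedes the move) would update $\theta_y$ to $z^*$'s successor, yielding $\theta_{z^*}=z^* \prec \theta_y \preceq y$.
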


\begin{proof}
    Item $z^*$ is moved to the front of the list, and its target
    $\theta_{z^*}$ is reset to this item, i.e., $z^* = \theta_{z^*}$. After
    the movement, we have $z^* \prec \theta_y$. This relation follows trivially
    if $z^*$ is indeed moved. However, it holds also if $z^*$ was already on the
    first position: even if $\theta_y = z^*$ before the move, $\theta_y$ would be
    updated to the successor of $z^*$ during the target cleanup. The resulting
    ordering is thus $\theta_{z^*} = z^* \prec \theta_y \preceq y$, i.e., the
    flavor of the pair becomes $d$.
\end{proof}


\begin{lemma}
    \label{lem:state_transitions_1}
    Fix $y \prec z^*$. The state transitions for pair $\set{z^*,y}$
    are given in \autoref{tab:state_transitions_1}.
\end{lemma}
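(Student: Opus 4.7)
The plan is to prove the lemma by direct case analysis on the six consolidated initial states ($\alpha^d$, $\beta^d$, $\alpha^{oe}$, $\beta^o$, $\beta^{ne}$, $\gamma^{ne}$) and the two choices of movement. For each entry of \autoref{tab:state_transitions_1}, I would compute the post-request mode using the second bullet of \autoref{obs:type_change}, then the post-movement flavor, and finally (if $z^*$ ends up before $y$) the additional mode update from the same observation.

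Before dispatching the cases, I would record three structural observations. First, target cleanup does not touch $\theta_y$, since $\theta_y \preceq y \prec z^*$ rules out $\theta_y = z^*$. Second, after the target reset, $\theta_{z^*}$ is set to whatever item is currently at the front of the list. Third, the partial move places $z^*$ strictly before $y$ if and only if $\theta_{z^*} \preceq y$, i.e., the initial flavor is $o$, $e$, or $n$; in flavor $d$ the mutual order of the pair is preserved.

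The full-move branch is then uniform across all six initial states: \autoref{lem:full_move} yields flavor $d$, and because $z^*$ is moved to the front it necessarily overtakes $y$, so the mode is updated once by the request and once by the overtake, giving $\alpha \mapsto \beta^d$ and $\beta, \gamma \mapsto \alpha^d$. The partial-move branch splits according to the initial flavor. For initial flavor $d$, the order is preserved, no second mode update occurs, and the new flavor is determined by comparing the new front item (which becomes $\theta_{z^*}$ after the reset) with the unchanged $\theta_y$, yielding flavor $e$ or $n$ --- combined as $ne$. For initial flavors $o$, $e$, or $n$, $z^*$ overtakes $y$, the roles of $x$ and $y$ in the flavor definitions swap, and one must re-express the new configuration under this flipped convention; a short check then shows the resulting flavor lies in $\set{d,o,e}$, collapsing to the combined label $oe$ (with an edge case yielding $d$ when $\theta_{z^*}$ coincided with the front).

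The main obstacle is the position-tracking in the partial-move branch: one has to account for how moving $z^*$ shifts the absolute positions of the items lying between $\theta_{z^*}$ and $z^*$, whether $z^*$ itself becomes the new front (so that the target reset sets $\theta_{z^*} = z^*$), and how the flavor definitions transform under the order flip. Once these bookkeeping observations are in hand, each entry of the table follows by direct substitution, and one simultaneously verifies that none of the forbidden states $\alpha^n$, $\gamma^d$, $\gamma^o$ is ever produced.
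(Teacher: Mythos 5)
Your treatment of the full move and of the partial move with initial flavor $d$ matches the paper's argument. However, your handling of the partial move with initial flavors $o$, $e$, or $n$ contains a genuine error. You claim the new flavor ``lies in $\{d,o,e\}$, collapsing to the combined label $oe$ (with an edge case yielding $d$ when $\theta_{z^*}$ coincided with the front).'' That criterion for when the new flavor is $d$ is wrong. The determining factor is the relative order of the two targets $\theta_{z^*}$ and $\theta_y$, not whether $\theta_{z^*}$ was the front item. If the initial flavor is $e$ or $n$, i.e.\ $\theta_{z^*} \preceq \theta_y$, then the partial move places $z^*$ strictly before $\theta_y$ (since $z^*$ is inserted right before the old $\theta_{z^*}$, which is $\preceq\theta_y$), and after the target reset puts the new $\theta_{z^*}$ at the front, the configuration is $\theta_{z^*} \preceq z^* \prec \theta_y \preceq y$. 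Thus the new flavor is \emph{always} $d$ in these cases, not merely in some edge case; this is precisely why \autoref{tab:state_transitions_1} records $\beta^{ne}\to\alpha^d$ and $\gamma^{ne}\to\alpha^d$ as unconditional transitions. Conversely, if the initial flavor is $o$, i.e.\ $\theta_y \prec \theta_{z^*}$, then after the move $\theta_y \prec z^*$ still holds, and with $\theta_{z^*}$ now at the front we get $\theta_{z^*}\preceq\theta_y\prec z^*\prec y$, so the new flavor is $o$ or $e$ and never $d$, explaining $\beta^o\to\alpha^{oe}$.

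Because you merge all three flavors into one paragraph with a single (incorrect) side condition, your ``short check'' would not reproduce these distinct outcomes — in particular it cannot justify why $\beta^{ne}$ and $\beta^o$ have different targets under a partial move. The fix, as the paper does, is to split the partial-move analysis into two separate sub-cases: initial flavor $o$ (where $z^*$ lands after $\theta_y$) versus initial flavor $e$ or $n$ (where $z^*$ lands before $\theta_y$). With that split your bookkeeping ideas would go through.
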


\begin{proof}
    First, suppose \ALG performs a full move. The pair $\set{z^*,y}$ is swapped,
    which changes its mode according to \autoref{obs:type_change} ($\alpha
    \to \beta$, $\beta \to \alpha$, $\gamma \to \alpha$). By
    \autoref{lem:full_move}, the flavor of the pair becomes $d$. This proves the
    correctness of the transitions in row three of \autoref{tab:state_transitions_1}.

    In the rest of the proof, we analyze the case when \ALG performs a partial
    move. We consider three sub-cases depending on the initial flavor of the
    pair $\set{z^*,y}$. For the analysis of mode changes we will apply
    \autoref{obs:type_change}.
    \begin{itemize}
    \item Before the movement, the flavor of the pair was $d$, i.e., $\theta_y
        \preceq y \prec \theta_{z^*} \preceq z^*$. 
        
        The movement of $z^*$ does not swap the pair, i.e., its mode
        transitions are $\alpha \to \beta$ and $\beta \to \gamma$. As
        $\theta_{z^*}$ is set to the list front, after the movement
        $\theta_{z^*} \preceq \theta_y \preceq y \prec z^*$, i.e., the flavor of
        the pair becomes either $e$ or~$n$. This explains the state transitions
        $\alpha^d \to \beta^{ne}$ and~$\beta^d \to \gamma^{ne}$.

    \item Before the movement, the flavor of the pair was $o$, i.e., 
        $\theta_y \prec \theta_{z^*} \preceq y \prec z^*$.

        Due to the movement, the pair is swapped, and its mode transitions
        are $\alpha \to \beta$ and $\beta \to \alpha$. As $\theta_{z^*}$ is set
        to the list front, we have $\theta_{z^*} \preceq \theta_y$. After the
        movement, $\theta_y \preceq z^* \prec y$, i.e., the flavor of the pair
        becomes either $o$ or $e$. This explains the state transitions
        $\alpha^{oe} \to (\beta^o$ or $\beta^{ne})$ and $\beta^o \to
        \alpha^{oe}$.

    \item Before the movement, the flavor of the pair was $e$ or $n$, i.e.,
        $\theta_{z^*} \preceq \theta_y \preceq y \prec z^*$.

        The movement swaps the pair, and its mode transitions are $\alpha
        \to \beta$, $\beta \to \alpha$ and~$\gamma \to \alpha$. After the
        movement $z^* \prec \theta_y$, and target $\theta_{z^*}$ is set to the
        list front, which results in $\theta_{z^*} \preceq z^* \prec \theta_y
        \preceq y$, i.e., the pair flavor becomes $d$. This explains the state
        transitions $\alpha^{oe} \to \beta^d$, $\beta^{ne} \to \alpha^d$, and
        $\gamma^{ne} \to \alpha^d$
    \qedhere
    \end{itemize}
\end{proof}

\begin{table}[t]
    \centering
    \begin{tabular}{|c|c|c|c|c|c|c|c|}
    \hline
    \textbf{State before move}
        & $\alpha^d$ & $\beta^d$ & $\alpha^{oe}$ & $\beta^o$ & $\beta^{ne}$ & $\gamma^{ne}$ \\
    \hline
    \textbf{State after partial move}
        & $\beta^{ne}$ & $\gamma^{ne}$ & $\beta^d$ or $\beta^o$ or $\beta^{ne}$ & 
            $\alpha^{oe}$ & $\alpha^d$ & $\alpha^d$ \\
    \hline
    \textbf{State after full move}
        & $\beta^d$ & $\alpha^d$ & $\beta^d$ & $\alpha^d$ & $\alpha^d$ & $\alpha^d$ \\
    \hline
    \end{tabular}
    \caption{State transitions for pairs $\set{z^*,y}$ where $y \prec z^*$ right before the request to  $z^*$.}
    \label{tab:state_transitions_1}
\end{table}


\begin{lemma}
    \label{lem:state_transitions_2}
    Fix $y \succ z^*$. The state transitions for pair $\set{z^*,y}$
    are given in \autoref{tab:state_transitions_2}.
\end{lemma}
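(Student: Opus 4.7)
The approach is to separately analyze how the mode and the flavor of the pair $\{z^*,y\}$ evolve. The mode transitions are immediate from \autoref{obs:type_change}: since $y \succ z^*$, the request itself drives $\alpha \to \alpha$, $\beta \to \alpha$, and $\gamma \to \beta$, and the subsequent movement of $z^*$ never swaps it past $y$, so the resulting mode is determined solely by the initial mode and is identical for full and partial moves. It remains to determine how the flavor changes in each case. For the full move, \autoref{lem:full_move} (stated for any $y \neq z^*$) gives flavor $d$ directly, which combined with the mode transitions yields the full-move row of the table: every initial state maps to $\alpha^d$, except $\gamma^{ne}$, which maps to $\beta^d$.

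For the partial move, I would case-split on the initial flavor, tracking the positions of $\theta_{z^*}$, $z^*$, $\theta_y$, and $y$ through the three phases: target cleanup only rewrites $\theta_y$ if it equals $z^*$, in which case $\theta_y$ becomes the successor of $z^*$; the partial move only shifts items originally sitting strictly between $\theta_{z^*}$ and $z^*$ one step back, leaving items in front of $\theta_{z^*}$ and behind $z^*$ untouched; and the target reset redirects $\theta_{z^*}$ to the new list front. If the initial flavor is $d$, then $\theta_y$ lies strictly behind $z^*$ and is untouched by the first two phases, so flavor $d$ is preserved. If the initial flavor is $o$ or $e$, we have $\theta_y \preceq z^*$ before the request; a short case analysis (depending on whether cleanup reroutes $\theta_y$ past $z^*$ and whether $\theta_{z^*} = z^*$ forces the partial move to be a no-op) shows that afterwards $z^* \prec \theta_y$, which together with $\theta_{z^*}$ being relocated to the front gives flavor $d$.

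The main obstacle is the initial flavor $n$, where $\theta_y$ sits strictly in front of $\theta_{z^*}$. Then $\theta_y$ cannot equal $z^*$ (so cleanup is irrelevant) and its position is unaffected by the partial move (positions strictly in front of $\theta_{z^*}$ do not shift). After the target reset, $\theta_{z^*}$ moves to position $1$, so the new flavor depends solely on whether $\theta_y$ already occupies that slot: if so, $\theta_{z^*} = \theta_y$ and the flavor is $e$; otherwise $\theta_{z^*} \prec \theta_y$ and the flavor is $o$. This disjunction produces the compound partial-move table entries $\beta^{ne} \to \alpha^d$ or $\alpha^{oe}$ (from the merged initial flavors $n$ and $e$) and $\gamma^{ne} \to \beta^d$, $\beta^o$, or $\beta^{ne}$; the remaining partial-move entries collapse to $\alpha^d$.
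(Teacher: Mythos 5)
Your proof is correct and takes essentially the same route as the paper: derive the mode transitions from \autoref{obs:type_change} (noting the movement of $z^*$ never swaps it past $y$), use \autoref{lem:full_move} for the full-move flavor, and case-split on the initial flavor for the partial move, showing that flavors $d$, $o$, $e$ all become $d$ while flavor $n$ becomes $o$ or $e$. The only minor difference is that you separate flavor $d$ from the $o$/$e$ subcase where the paper treats $d$, $o$, $e$ jointly via the invariant $\theta_{z^*} \preceq \theta_y$, but the argument is substantively identical.
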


\begin{table}[t]
    \centering
    \begin{tabular}{|c|c|c|c|c|c|c|c|}
    \hline
    \textbf{State before move}
        & $\alpha^d$ & $\beta^d$ & $\alpha^{oe}$ & $\beta^o$ & $\beta^{ne}$ & $\gamma^{ne}$ \\
    \hline
    \textbf{State after move}
        & $\alpha^d$ & $\alpha^d$ & $\alpha^d$ & $\alpha^d$ &
            $\alpha^d$ or $\alpha^{oe}$ & $\beta^d$ or $\beta^o$ or $\beta^{ne}$ \\ 
    \hline
    \end{tabular}
    \caption{State transitions for pairs $\set{z^*,y}$ where $z^* \prec y$ right before the request to $z^*$.}
    \label{tab:state_transitions_2}
\end{table}

\begin{proof}
    The pair $\set{z^*,y}$ is not swapped due to the request, and 
    thus, by \autoref{obs:type_change}, its mode transition is
    $\alpha \to \alpha$, $\beta \to \alpha$, $\gamma \to \beta$.

    If \ALG performs a full move, the flavor of the pair becomes $d$ by 
    \autoref{lem:full_move}. This explains the state transitions 
    $\alpha^d \to \alpha^d$, $\beta^d \to \alpha^d$, $\alpha^{oe} \to \alpha^d$, 
    $\beta^o \to \alpha^d$, $\beta^{ne} \to \alpha^d$, and $\gamma^{ne} \to \beta^d$.
    
    In the following, we assume that \ALG performs a partial move, 
    and we will identify cases where the resulting pair flavor is different than $d$. 
    We consider two cases.
    \begin{itemize}        
    \item The initial flavor is $d$, $o$ or $e$. That is 
        $\theta_{z^*} \preceq z^* \prec y$ and $\theta_{z^*} \preceq \theta_y \preceq y$. 
        During the target cleanup, $\theta_y$ may be updated to its successor, 
        but it does not affect these relations, and in particular we still have 
        $\theta_{z^*} \preceq \theta_y$. Thus, when $z^*$ is moved, 
        it gets placed before $\theta_y$. This results in the ordering 
        $\theta_{z^*} \preceq z^* \prec \theta_{y} \preceq y$, i.e., 
        the resulting flavor is $d$. 
    
    \item The initial flavor is $n$, i.e., 
        $\theta_y \prec \theta_{z^*} \preceq z^* \prec y$.
        As $\theta_y \neq z^*$, the target $\theta_y$ 
        is not updated during the target cleanup.
        As $z^*$
        is moved right before original position of $\theta_{z^*}$, it is placed
        after $\theta_y$, and the resulting ordering is $\theta_{z^*} \preceq
        \theta_y \prec z^* \prec y$. That is, the flavor becomes $o$ or~$e$,
        which explains the state transitions $\beta^{ne} \to \alpha^{oe}$ and
        $\gamma^{ne} \to (\beta^o$ or $\beta^{ne})$. 
    \qedhere
    \end{itemize}
\end{proof}

\begin{table}[t]
    \centering
    \begin{tabular}{|c|c|c|c|c|c|c|c|}
    \hline
    \textbf{State before move}
        & $\alpha^d$ & $\beta^d$ & $\alpha^{oe}$ & $\beta^o$ & $\beta^{ne}$ & $\gamma^{ne}$ \\
    \hline
    \textbf{State after move}
        & $\alpha^d$ & $\beta^d$ & $\alpha^{oe}$ & $\beta^o$ or $\beta^{ne}$ & $\beta^{ne}$ & $\gamma^{ne}$ \\ 
    \hline
    \end{tabular}
    \caption{State transitions for pairs $\set{x,y}$ where $x \neq z^*$ and $y \neq z^*$
    right before the request to $z^*$.}
    \label{tab:state_transitions_3}
\end{table}


\begin{lemma}
    \label{lem:state_transitions_3}
    Fix $y \prec x$, such that $x \neq z^*$ and $y \neq z^*$. 
    State transitions for pair $\set{x,y}$
    are given in \autoref{tab:state_transitions_3}.
\end{lemma}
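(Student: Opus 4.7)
The plan is to split the analysis into the mode component and the flavor component of the state and handle each separately.

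For the mode, the key observation is that since $z^* \notin \{x,y\}$, the request does not change the subsequence $\sigma_{xy}$ and therefore leaves the work function $W^{xy}$ unchanged; moreover, neither $x$ nor $y$ is moved by \ALG, so their relative order $y \prec x$ is preserved. This immediately gives the mode column of \autoref{tab:state_transitions_3} (namely $\alpha \to \alpha$, $\beta \to \beta$, $\gamma \to \gamma$), independently of whether \ALG performs a full or a partial move.

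For the flavor, I would first note that moving $z^*$ (either to the front or to just before $\theta_{z^*}$) does not alter the relative order of any two items distinct from $z^*$. The only way $\theta_x$ or $\theta_y$ can change is thus during target cleanup, when a target equal to $z^*$ is reset to the successor of $z^*$. I would then run a short case analysis on the initial flavor. In flavor $d$, one verifies that the inequalities $\theta_y \preceq y$ and $y \prec \theta_x$ both survive, using $z^* \neq y$ together with $\theta_y \preceq y$ to deduce $z^* \prec y$ when $\theta_y = z^*$ (so the successor of $z^*$ is still $\preceq y$), and analogously for $\theta_x$. In flavors $o$ and $e$ (both satisfying $\theta_y \preceq \theta_x$), \autoref{lem:theta_relation_preserved} directly yields that the relations $\theta_y \preceq \theta_x$ and $\theta_y = \theta_x$ are each preserved, so $e$ stays $e$ while $o$ either stays $o$ or, in the borderline case when $\theta_y$ and $\theta_x$ are adjacent and $z^* = \theta_y$, becomes $e$; this explains the entries $\alpha^{oe} \to \alpha^{oe}$ and $\beta^o \to \beta^o$ or $\beta^{ne}$. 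For flavor $n$ (where $\theta_x \prec \theta_y$), applying the same lemma with the roles of $x$ and $y$ swapped preserves $\theta_x \preceq \theta_y$, so $n$ either stays $n$ or collapses to $e$; both possibilities are absorbed into $ne$, giving the last two entries of \autoref{tab:state_transitions_3}.

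The main obstacle is not conceptual but bookkeeping: in each flavor, after a potential update of a target to the successor of $z^*$, one has to recheck the inequalities defining the flavor, including the background constraints $\theta_x \preceq x$ and $\theta_y \preceq y$. In every case this reduces to combining $z^* \neq x,y$ with those standing invariants to conclude that $z^*$ lies strictly before $x$ (or $y$) whenever it equals the corresponding target, so that the successor of $z^*$ still satisfies the needed bound.
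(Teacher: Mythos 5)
Your proof is correct and follows essentially the same approach as the paper: modes are unchanged because neither $W^{xy}$ nor the relative order of $x$ and $y$ is affected by a request to $z^* \notin \{x,y\}$, and flavors are handled by a case analysis that reduces to \autoref{lem:theta_relation_preserved} plus the observation that target cleanup only advances a target by one position. Your treatment of flavor $d$ is slightly more explicit than the paper's (which only spells out the $\theta_y = z^*$ subcase), but the substance is identical.
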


\begin{proof}
    The mode of the pair $\set{x,y}$ is not affected by the request to
    $z^*$. 
    
    The flavor of the pair $\set{x,y}$ depends on mutual relations 
    between $x$, $y$, $\theta_x$ and $\theta_y$.
    By~\autoref{lem:theta_relation_preserved}, the only possible change is 
    that $\theta_x$ and $\theta_y$ were different but may become equal:
    this happens when they were adjacent and the earlier of them 
    is equal to $z^*$.
    We consider four cases depending on the initial flavor of the pair.    
    \begin{itemize}
        \item The initial flavor was $e$ ($\theta_y = \theta_x \preceq y \prec x$).
            As $\theta_y$ and $\theta_x$ are not adjacent, the flavor remains~$e$.
        \item The initial flavor was $d$ ($\theta_y \preceq y \prec \theta_x \preceq x$).
            Suppose $\theta_y = z^*$. As $y \neq z^*$, 
            we have $\theta_y = z^* \prec y \prec \theta_x$. That is, 
            $\theta_x$ and $\theta_y$ are not adjacent, and thus the flavor remains $d$.
        \item The initial flavor was $o$ ($\theta_y \prec \theta_x \preceq y \prec x$).
            In this case it is possible that $\theta_y$ and $\theta_x$ are adjacent and $z^* = \theta_y$.
            The flavor may thus change to $e$ or remain $o$.
        \item The initial flavor was $n$ ($\theta_x \prec \theta_y \preceq y \prec x$).
            Similarly to the previous case, it is possible that $\theta_x$ and $\theta_y$ are adjacent and $z^* = \theta_x$.
            The flavor may thus change to $e$ or remain~$n$.
        \qedhere
    \end{itemize}
\end{proof}


\subsection{Amortized Analysis}
\label{sec:amortized_analysis}

We set $R = \oneeighth (23+\sqrt{17}) \leq 3.3904$ as our desired competitive ratio.

In the following, we fix a step $t$ in which item $z^*$ is requested. We partition 
$\P$ into three sets corresponding to the three types of pairs:

\begin{itemize}\setlength{\itemsep}{0.025in}
    \item $\P^t_1 \triangleq \{ \{y, z^*\} : y \prec z^* \}$ (pairs where $z^*$ is the second item, 
    analyzed in \autoref{tab:state_transitions_1}),
    \item $\P^t_2 \triangleq \{ \{y, z^*\} : z^* \prec y \}$ (pairs where $z^*$ is the first item,
    analyzed in \autoref{tab:state_transitions_2}),
    \item $\P^t_3 \triangleq \{ \{x, y\} : x \neq z^* \wedge y \neq z^* \}$ (pairs where $z^*$ is not involved,
    analyzed in \autoref{tab:state_transitions_3}).
\end{itemize}

For succinctness, wherever it does not lead to ambiguity, we 
omit subscripts $t$, i.e., write $\Delta \Phi_{xy}$ and
$\Delta w^{xy}$ instead of $\Delta_t \Phi_{xy}$ and~$\Delta_t w^{xy}$. We also
omit superscripts $t$ in $\P^t_1$, $\P^t_2$, and~$\P^t_3$.

Our goal is to show the following three bounds:

\begin{itemize}\setlength{\itemsep}{0.025in}
    \item $\Delta \ALG + \sum_{ \set{x,y} \in \P_1} \Delta \Phi_{xy} 
        \leq R \cdot \sum_{ \set{x,y} \in \P_1} \Delta w^{xy}$, 
    \item $\sum_{ \set{x,y} \in \P_2} \Delta \Phi_{xy} 
        \leq R \cdot \sum_{ \set{x,y} \in \P_2} \Delta w^{xy}$, 
    \item $\sum_{ \set{x,y} \in \P_3} \Delta \Phi_{xy} 
        \leq R \cdot \sum_{ \set{x,y} \in \P_3} \Delta w^{xy}$.
\end{itemize}

Note that the left hand sides of these inequalities 
correspond to the portions of amortized cost of \ALG corresponding to sets $\P_1, \P_2, \P_3$,
while the right hand sides 
are equal to $R$ times the corresponding portion of the cost of pair-based \OPT.
Hence, if we can show the above inequalities for every step $t$, the competitive 
ratio of $R$ will follow by simply adding them up.

As we show in the sections that follow, these bounds reduce to some constraints involving state potentials
$\alpha^d$, $\beta^d$, $\alpha^{oe}$, $\beta^o$, $\beta^{ne}$ and $\gamma^{ne}$. 
The bounds for $\P_2$ and $\P_3$, while they involve summations over pairs, 
can be justified by considering individual pairs and the needed constraints are
simple inequalities between state potentials, summarized in the following assumption:

\begin{assumption}
\label{def:types_relations}
    We assume that
    \begin{itemize}
        \item $\alpha^d = 0$,
        \item all constants $\beta^d$, $\alpha^{oe}$, $\beta^o$, $\beta^{ne}$ and $\gamma^{ne}$ are non-negative,
        \item $\alpha^{oe} \leq \beta^{ne} + \half R \leq \beta^o + \half R$,
        \item $\max\{\beta^d, \beta^o \} \leq \gamma^{ne} + \half R$.
    \end{itemize}
\end{assumption}

The bound for $\P_1$ is most critical (not surprisingly, as it corresponds to requesting
the second item of a pair). To analyze this bound, the needed constraints, besides the state potentials 
also need to involve the numbers of pairs that are in these states.
This gives rise to a non-linear optimization problem that we need to solve.


\subsubsection{Analyzing pairs from set \texorpdfstring{$\P_1$}{P1}}

The proof of the following lemma is deferred to \autoref{sec:choosing_parameters}.

\begin{restatable}{lemma}{amortizedcost}
    \label{lem:amortized_cost_P1}
    There exist parameters 
    $\alpha^d$, $\beta^d$, $\alpha^{oe}$, $\beta^o$, $\beta^{ne}$ and $\gamma^{ne}$, 
    satisfying \autoref{def:types_relations},
    such that  
    for any step $t$, it holds that 
    $\Delta \ALG + \sum_{ \set{x,y} \in \P_1} \Delta \Phi_{xy} 
        \leq R \cdot \sum_{ \set{x,y} \in \P_1} \Delta w^{xy}$.
\end{restatable}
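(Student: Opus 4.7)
My plan is to express both sides of the desired inequality as linear functions of a state-count vector, apply the convex-combination identity $\min(F,P) \leq \lambda F + (1-\lambda) P$ to collapse the $\min$ of the two move costs into a single linear expression, and finally to choose potential values that satisfy the resulting compact system of inequalities at the claimed ratio $R$.

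First, fix a step with requested item $z^*$ and, for each state $s \in \{\alpha^d, \beta^d, \alpha^{oe}, \beta^o, \beta^{ne}, \gamma^{ne}\}$, let $n_s$ denote the number of pairs $\set{z^*,y} \in \P_1$ whose state immediately before the request equals $s$. Because $z^*$ sits at position $|\P_1|+1$, the access cost equals $|\P_1| = \sum_s n_s$. The full-move swap cost is also $|\P_1|$, while the partial-move swap cost is exactly the number of items $y$ with $\theta_{z^*} \preceq y \prec z^*$, which by the definitions of the flavors equals $n_{\alpha^{oe}} + n_{\beta^o} + n_{\beta^{ne}} + n_{\gamma^{ne}}$. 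Using \autoref{tab:state_transitions_1}, I would write $\sum_{\set{x,y}\in\P_1}\Delta\Phi_{xy}$ for each of the two move choices as a linear function of $(n_s)$; the row for $\alpha^{oe}$ under a partial move lists three possible successor states, so its per-pair contribution is upper-bounded by $\max\{\beta^d,\beta^o,\beta^{ne}\}-\alpha^{oe}$. Finally, combining \autoref{obs:type_change} with \autoref{obs:type_change_causes_cost} gives $\sum_{\set{x,y}\in\P_1}\Delta w^{xy} = \tfrac{1}{2}(|\P_1| - n_{\gamma^{ne}})$, since exactly the pairs initially in mode $\gamma$ escape a request-induced mode change.

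After these substitutions, the target inequality reads $\min\bigl(\sum_s n_s c_s^F,\, \sum_s n_s c_s^P\bigr) \leq \sum_s n_s r_s$, where $c_s^F, c_s^P$ are explicit linear functions of the state potentials and $r_s = R/2$ for $s \neq \gamma^{ne}$, $r_{\gamma^{ne}} = 0$. Applying $\min(F,P)\leq \lambda F + (1-\lambda)P$ for any $\lambda\in[0,1]$ and using non-negativity of the $n_s$, it suffices to exhibit a single $\lambda \in [0,1]$ together with non-negative values $\beta^d, \alpha^{oe}, \beta^o, \beta^{ne}, \gamma^{ne}$ satisfying \autoref{def:types_relations} such that $\lambda c_s^F + (1-\lambda) c_s^P \leq r_s$ holds for each of the six states. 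This turns the amortized analysis into a finite scalar system in seven unknowns ($\lambda$, five potentials, and $R$).

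The remaining task, which I expect to be the main obstacle, is the parametric optimization: minimize $R$ over all $(\lambda, \beta^d, \alpha^{oe}, \beta^o, \beta^{ne}, \gamma^{ne})$ feasible for the system above. The per-state inequality for $\gamma^{ne}$ immediately forces $\gamma^{ne} \geq 2$; the one for $\beta^{ne}$ forces $\beta^{ne} \geq 2 - R/2$; and the remaining four inequalities couple $\lambda$ with the other potentials. I would tighten several of these inequalities simultaneously (starting with the two just identified and the $\alpha^d$- and $\alpha^{oe}$-constraints) and eliminate variables; elimination is expected to collapse the problem to a quadratic in $R$ whose relevant root is $(23+\sqrt{17})/8$. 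The hardest step is not the algebraic elimination but the feasibility verification afterwards: one must check that the specific parameter assignment produced simultaneously satisfies all six per-state inequalities together with every clause of \autoref{def:types_relations}, and in particular that the $\max\{\beta^d,\beta^o,\beta^{ne}\}$ worst case in the $\alpha^{oe}$ partial-move transition is realized consistently with the chosen ordering of the potentials.
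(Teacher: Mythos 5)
Your proposal is correct and follows essentially the same route as the paper: the paper encodes the six state counts as a vector $V$, expresses the two move costs and the pair-based optimum as dot products $\G{PM}\odot V$, $\G{FM}\odot V$, $\G{OPT}\odot V$, and exhibits $c=\tfrac14(\sqrt{17}-1)$ so that $c\,\G{PM}+(1-c)\,\G{FM}=R\cdot\G{OPT}$ holds exactly, whence $\min\{\G{PM}\odot V,\G{FM}\odot V\}\le R\,\G{OPT}\odot V$. Your $\lambda$ is the paper's $c$, your $c_s^F,c_s^P,r_s$ are the components of $\G{FM},\G{PM},\G{OPT}$, and your per-state system is exactly the componentwise inequality that the paper verifies (with equality) for its explicit parameter choice.
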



\subsubsection{Analyzing pairs from set \texorpdfstring{$\P_2$}{P2}}

\begin{lemma}
    \label{lem:amortized_cost_P2}
    For any step $t$, it holds that $\sum_{ \set{x,y} \in \P_2} \Delta \Phi_{xy} 
        \leq R \cdot \sum_{ \set{x,y} \in \P_2} \Delta w^{xy}$.
\end{lemma}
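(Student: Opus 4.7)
}
The plan is to verify the inequality \emph{pair by pair}: I will show that for every pair $\{z^*,y\} \in \P_2$ (i.e., with $z^* \prec y$ at the moment of the request), we have
$\Delta \Phi_{z^* y} \leq R \cdot \Delta w^{z^* y}$,
and then sum over $\P_2$. This pairwise reduction is possible because, in contrast to the $\P_1$ case, no term for $\Delta \ALG$ appears on the left-hand side, so the bound decouples across pairs.

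First I would pin down $\Delta w^{z^* y}$ for each possible transition. By \autoref{obs:type_change}, since $y \succ z^*$ the mode transitions triggered by the request are $\alpha \to \alpha$, $\beta \to \alpha$, and $\gamma \to \beta$, and the subsequent movement of $z^*$ produces no further mode change (it cannot swap $z^*$ past $y$). Together with \autoref{obs:type_change_causes_cost}, this gives $\Delta w^{z^* y} = 0$ whenever the initial mode is $\alpha$, and $\Delta w^{z^* y} = \half$ whenever it is $\beta$ or $\gamma$.

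Next I would walk through the nine transitions listed in \autoref{tab:state_transitions_2} and verify $\Delta \Phi_{z^* y} \leq R \cdot \Delta w^{z^* y}$ in each case using \autoref{def:types_relations}. The two transitions with $\Delta w^{z^* y} = 0$ are $\alpha^d \to \alpha^d$ (trivial) and $\alpha^{oe} \to \alpha^d$, for which $\Delta \Phi_{z^*y} = -\alpha^{oe} \leq 0$ by non-negativity. All remaining transitions have $\Delta w^{z^* y} = \half$, so it suffices to show $\Delta \Phi_{z^* y} \leq \half R$. Transitions of the form $\cdot \to \alpha^d$ give $\Delta \Phi_{z^* y} \leq 0$ since $\alpha^d = 0$ and the source potentials are non-negative. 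The remaining ones are $\beta^{ne} \to \alpha^{oe}$, covered by $\alpha^{oe} \leq \beta^{ne} + \half R$; and $\gamma^{ne} \to \beta^d$, $\gamma^{ne} \to \beta^o$, $\gamma^{ne} \to \beta^{ne}$, covered by $\max\{\beta^d,\beta^o\} \leq \gamma^{ne} + \half R$ together with the chained relation $\beta^{ne} \leq \beta^o$ that follows from $\alpha^{oe} \leq \beta^{ne} + \half R \leq \beta^o + \half R$.

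I do not expect a real obstacle here: the lemma is essentially a table-driven verification, and \autoref{def:types_relations} has been crafted precisely so that every entry of \autoref{tab:state_transitions_2} yields a valid inequality. The only mildly subtle point is the $\alpha^{oe} \to \alpha^d$ case, where $\Delta w^{z^* y} = 0$ forces us to use non-negativity of the source potential rather than the $\half R$ slack; every other non-trivial case reduces directly to one of the inequalities in \autoref{def:types_relations}. Summing the pairwise inequalities over $\P_2$ then gives the claim.
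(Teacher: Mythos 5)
Your proposal is correct and takes essentially the same approach as the paper: a pair-by-pair reduction followed by a table-driven verification of $\Delta\Phi_{z^*y}\leq R\cdot\Delta w^{z^*y}$ against the entries of \autoref{tab:state_transitions_2} using \autoref{def:types_relations}. The only cosmetic difference is that the paper groups the transitions by initial mode ($\alpha$ versus $\beta,\gamma$) while you enumerate all nine transitions directly, but the bounding inequalities invoked are the same.
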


\begin{proof}
    Recall that $\P_2$ contains all pairs $\{z^*,y\}$, such that $z^* \prec y$.
    Thus, it is sufficient to show that 
    $\Delta \Phi_{z^*y} \leq R \cdot \Delta w^{z^*y}$ holds for any 
    such pair $\{z^*,y\}$. 
    The lemma will then follow by summing over all pairs from $\P_2$.

    By \autoref{def:types_relations}, 
    we have 
    \begin{align}
        \label{eq:obs_1}
        \alpha^d - \alpha^{oe} & \leq 0, \\
        \label{eq:obs_2}
        \max \{\alpha^d, \alpha^{oe} \} - \beta^{ne} 
            & = \alpha^{oe} - \beta^{ne} \leq \half R, \\
        \label{eq:obs_3}
        \max \{\beta^d, \beta^o, \beta^{ne}\} - \gamma^{ne} 
            & = \max \{ \beta^d, \beta^o \} - \gamma^{ne} \leq \half R.
    \end{align}
    
    We consider two cases depending on the initial mode of the pair $\set{z^*,y}$.
    In each case, we upper-bound the potential change on the basis of possible 
    state changes of this pair (cf.~\autoref{tab:state_transitions_2}).

    \begin{itemize}
    \item The initial mode of $\set{z^*,y}$ is $\alpha$. By
    \autoref{tab:state_transitions_2}, this mode remains $\alpha$, and thus by 
    \autoref{obs:type_change_causes_cost},
    $\Delta
    w^{z^*y} = 0$.    
    Then, 
    \begin{align*}
        \Delta \Phi_{z^*y} 
            & \leq \max \left\{ \alpha^d - \alpha^d, \alpha^d - \alpha^{oe} \right\} 
                && \text{(by \autoref{tab:state_transitions_2})} \\
            & = 0 = R \cdot \Delta w^{z^*y}.
                && \text{(by \eqref{eq:obs_1})} 
        \end{align*}

    \item The initial mode of $\set{z^*,y}$ is $\beta$ or $\gamma$. By
    \autoref{tab:state_transitions_2}, its mode changes due to the request
    to~$z^*$, and hence, by \autoref{obs:type_change_causes_cost}, $\Delta
    w^{z^*y} = \half$. Then, 
    \begin{align*}
        \Delta \Phi_{z^*y} 
            & \leq \max \{ 
                \alpha^d - \alpha^d, \alpha^d - \beta^d, \alpha^d - \alpha^{oe}, \alpha^d - \beta^o, \\
			&   \quad \quad \quad \quad   \max \{\alpha^d, \alpha^{oe} \} - \beta^{ne}, \\
            & \quad \quad \quad \quad \max \{\beta^d, \beta^o, \beta^{ne}\} - \gamma^{ne} \} 
                && \text{(by \autoref{tab:state_transitions_2})} \\
            & \leq \max \{ - \beta^d, -\beta^o, \half R, \half R \} 
                && \text{(by $\alpha^d = 0$, \eqref{eq:obs_1}, \eqref{eq:obs_2} and \eqref{eq:obs_3})} \\
            & = \half R = R \cdot \Delta w^{z^*y}.
    && \qedhere
    \end{align*}
    \end{itemize}
\end{proof}


\subsubsection{Analyzing pairs from set \texorpdfstring{$\P_3$}{P3}}

\begin{lemma}
    \label{lem:amortized_cost_P3}
    For any step $t$, it holds that $\sum_{ \set{x,y} \in \P_3} \Delta \Phi_{xy} 
    \leq R \cdot \sum_{ \set{x,y} \in \P_3} \Delta w^{xy}$.
\end{lemma}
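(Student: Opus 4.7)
The plan is to observe that the right-hand side of the inequality is identically zero for pairs in $\P_3$, and then show that the left-hand side is non-positive by inspecting \autoref{tab:state_transitions_3}.

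First, I would argue that for every pair $\set{x,y} \in \P_3$ we have $\Delta w^{xy} = 0$. Indeed, neither $x$ nor $y$ equals $z^*$, so the subsequence $\sigma_{xy}$ is unaffected by the current request. Consequently, the work function $W^{xy}$ does not change, the mode of $\set{x,y}$ is preserved, and by \autoref{obs:type_change_causes_cost} we have $\Delta w^{xy} = 0$. Summing over $\P_3$, the right-hand side of the target inequality vanishes.

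It therefore suffices to show $\sum_{\set{x,y}\in\P_3} \Delta \Phi_{xy} \leq 0$, and I would do this pairwise. Inspecting \autoref{tab:state_transitions_3}, every row except $\beta^o$ is self-loops: $\alpha^d \to \alpha^d$, $\beta^d \to \beta^d$, $\alpha^{oe} \to \alpha^{oe}$, $\beta^{ne} \to \beta^{ne}$, $\gamma^{ne} \to \gamma^{ne}$, all giving $\Delta \Phi_{xy} = 0$. The only non-trivial transition is $\beta^o \to \beta^{ne}$, for which $\Delta \Phi_{xy} = \beta^{ne} - \beta^o$. By \autoref{def:types_relations}, $\beta^{ne} + \half R \leq \beta^o + \half R$, which gives $\beta^{ne} \leq \beta^o$, so $\Delta \Phi_{xy} \leq 0$ in this case as well.

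Summing the pairwise bounds yields $\sum_{\set{x,y}\in\P_3} \Delta \Phi_{xy} \leq 0 = R \cdot \sum_{\set{x,y}\in\P_3} \Delta w^{xy}$, which is exactly the claim. There is no real obstacle here: the only subtlety is confirming the $\beta^o \to \beta^{ne}$ case, which relies on the ordering $\beta^{ne} \leq \beta^o$ that is already baked into \autoref{def:types_relations}.
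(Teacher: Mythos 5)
Your proof is correct and takes essentially the same approach as the paper: inspect Table~\ref{tab:state_transitions_3} pairwise, observe that the only non-trivial transition is $\beta^o \to \beta^{ne}$, and use $\beta^{ne} \leq \beta^o$ from Assumption~\ref{def:types_relations}. The only cosmetic difference is that you explicitly note $\Delta w^{xy} = 0$ for $\P_3$ pairs (which is true, since $\sigma_{xy}$ is unchanged), whereas the paper simply uses the weaker fact $0 \leq R \cdot \Delta w^{xy}$; both close the argument equally well.
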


\begin{proof}
    As in the previous lemma, we show that 
    the inequality
    $\Delta \Phi_{xy} \leq R \cdot \Delta w^{xy}$ holds for any 
    pair $\set{x,y} \in \P_3$, i.e., for a pair $\set{x,y}$, such that $x
    \neq z^*$ and $y \neq z^*$. The lemma will then follow by summing
    over all pairs $\set{x,y} \in \P_3$.

    Possible state transitions of such a pair $\set{x,y}$ are given in
    \autoref{tab:state_transitions_3}. Hence, such a pair either does not change
    its state (and then $\Delta \Phi_{xy} = 0$) or it changes it from $\beta^o$
    to $\beta^{ne}$ (and then $\Delta \Phi_{xy} = \beta^{ne} - \beta^o \leq 0$
    by \autoref{def:types_relations}). In either case, 
    $\Delta \Phi_{xy} \leq 0 \leq R \cdot \Delta w^{xy}$.
\end{proof}


\subsubsection{\texorpdfstring{Proof of $R$-competitiveness}{Proof of R-competitiveness}}

We now show that the three lemmas above imply that \ALG is $R$-competitive.

\begin{theorem}
    \label{thm:main_theorem}
    For an appropriate choice of parameters, the competitive 
    ratio of \ALG is at most $R = \oneeighth (23+\sqrt{17}) \leq 3.3904$.
\end{theorem}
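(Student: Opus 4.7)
The plan is to combine the three amortized bounds from Lemmas~\ref{lem:amortized_cost_P1}, \ref{lem:amortized_cost_P2}, and~\ref{lem:amortized_cost_P3} into a single per-step inequality, then telescope over all steps and invoke the pair-based lower bound on $\OPT$ from Lemma~\ref{lem:wf_vs_opt}. Concretely, I would first fix the parameters $\alpha^d, \beta^d, \alpha^{oe}, \beta^o, \beta^{ne}, \gamma^{ne}$ guaranteed by Lemma~\ref{lem:amortized_cost_P1} (which by construction satisfy Assumption~\ref{def:types_relations}, and hence the hypotheses used in the proofs of Lemmas~\ref{lem:amortized_cost_P2} and~\ref{lem:amortized_cost_P3}). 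Define the global potential $\Phi = \sum_{\set{x,y} \in \P} \Phi_{xy}$.

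For each step $t$, since $\P = \P_1^t \sqcup \P_2^t \sqcup \P_3^t$, adding the three lemmas yields
\[
    \Delta_t \ALG + \Delta_t \Phi \;\leq\; R \cdot \sum_{\set{x,y}\in\P} \Delta_t w^{xy}.
\]
Summing over $t = 1, \ldots, T$ telescopes the potential term to $\Phi_T - \Phi_0$, giving
\[
    \ALG(\sigma) + \Phi_T - \Phi_0 \;\leq\; R \cdot \sum_{t=1}^T \sum_{\set{x,y}\in\P} \Delta_t w^{xy}.
\]

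To finish, I would observe that initially every pair is in state $\alpha^d$ with $\alpha^d = 0$ by Assumption~\ref{def:types_relations}, so $\Phi_0 = 0$, and all state potentials are non-negative, so $\Phi_T \geq 0$. Therefore $\ALG(\sigma) \leq R \cdot \sum_{t,\set{x,y}} \Delta_t w^{xy}$, and applying Lemma~\ref{lem:wf_vs_opt} bounds the right-hand side by $R \cdot \OPT(\sigma)$, yielding $\ALG(\sigma) \leq R \cdot \OPT(\sigma)$ as claimed.

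The main obstacle is not in this final assembly --- which is a routine potential-function argument --- but rather in Lemma~\ref{lem:amortized_cost_P1}, whose proof is deferred to \autoref{sec:choosing_parameters} and which is where the specific value $R = \oneeighth(23+\sqrt{17})$ is extracted from a non-linear optimization over the state potentials and the counts of pairs in each state. At the level of this theorem, the only delicate points are to verify that the three per-step inequalities cover a disjoint partition of $\P$ so that their sum accounts for the full potential change, and that $\Phi_0 = 0$ follows from the initial state $\alpha^d$ being assigned zero potential.
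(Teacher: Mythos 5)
Your proposal is correct and follows exactly the same approach as the paper: sum the three per-step amortized bounds over the disjoint partition $\P = \P_1^t \sqcup \P_2^t \sqcup \P_3^t$, telescope the potential using $\Phi_0 = 0$ (all pairs start in $\alpha^d = 0$) and $\Phi_T \geq 0$, and then invoke \autoref{lem:wf_vs_opt} to bound the accumulated pair-based cost by $R \cdot \OPT(\sigma)$. Your remark that the real content lies in \autoref{lem:amortized_cost_P1} and its parameter choice, rather than in this routine assembly, also matches the paper's structure.
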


\begin{proof}
Fix any sequence $\sigma$ consisting of $T$ requests. 
By summing the guarantees of \autoref{lem:amortized_cost_P1},
\autoref{lem:amortized_cost_P2}, and \autoref{lem:amortized_cost_P3}, we obtain
that for any step $t$, it holds that
\[
    \Delta_t \ALG + \sum_{ \set{x,y} \in \P} \Delta_t \Phi_{xy} 
        \leq R \cdot \sum_{ \set{x,y} \in \P} \Delta_t w^{xy}.
\]
By summing over all steps, observing that the potentials are non-negative and the
initial potential is zero
(cf.~\autoref{def:types_relations}), we immediately obtain that 
$
    \ALG(\sigma) \leq R \cdot \sum_{t=1}^T \sum_{ \set{x,y} \in \P} \Delta w^{xy}
\leq R \cdot \OPT(\sigma).
$
The second inequality follows by \autoref{lem:wf_vs_opt}.
\end{proof}


\subsection{Proof of \autoref{lem:amortized_cost_P1}}
\label{sec:choosing_parameters}

Again, we focus on a single step $t$, in which the requested item is denoted~$z^*$.
We let $A^d$, $B^d$, $A^{oe}$, $B^o$, $B^{ne}$, $C^{ne}$ be
the number of items $y$ preceding $z^*$ such that pairs $\set{z^*,y}$ have states
$\alpha^d$, $\beta^d$, $\alpha^{oe}$, $\beta^o$, $\beta^{ne}$ and $\gamma^{ne}$,
respectively. Let 
\[
    V \triangleq [A^d, B^d, A^{oe}, B^o, B^{ne}, C^{ne}].
\]
Note that $\|V\|_1$ is the number of items preceding $z^*$, and thus also the
access cost \ALG pays for the request. Moreover, $A^d + B^d$ is the number of
items that precede $\theta_{z^*}$.  We use~$\odot$ to denote scalar product
(point-wise multiplication) of two vectors. 

We define three row vectors $\G{OPT}$, $\G{PM}$, and $\G{FM}$, such that
\begin{align*}   
    \begin{pmatrix}
        \G{PM} \\
        \G{FM} \\
        \G{OPT}
    \end{pmatrix} 
    & = 
    \begin{pmatrix}
        1 & 1 & 2 & 2 & 2 & 2 \\
        2 & 2 & 2 & 2 & 2 & 2 \\
        \half & \half & \half & \half & \half & 0 
    \end{pmatrix} \\
    & + 
    \begin{pmatrix}
        \beta^{ne} & (\gamma^{ne} - \beta^d) & (\max\{\beta^d, \beta^o\} - \alpha^{oe}) & (\alpha^{oe} - \beta^o) & -\beta^{ne} & -\gamma^{ne} \\
        \beta^d & -\beta^d & (\beta^d - \alpha^{oe}) & -\beta^o & -\beta^{ne} & -\gamma^{ne} \\
        0 & 0 & 0 & 0 & 0 & 0
    \end{pmatrix}
\end{align*}


\subparagraph*{Expressing costs as vector products.}
Recall that to prove \autoref{lem:amortized_cost_P1}, we need to relate the $\P_1$ portion of
the amortized cost of \ALG, i.e., $\Delta \ALG + \sum_{ \set{x,y} \in \P_1}
\Delta \Phi_{xy}$ and the corresponding portion of the cost of pair-based \OPT, i.e., $\sum_{ \set{x,y} \in
\P_1} \Delta w^{xy}$. In the following two lemmas, we show how to express both
terms as vector products.


\begin{lemma}
\label{lem:wf_as_product}
    It holds that $\sum_{y \prec z^*} \Delta w^{z^*y} = \G{OPT} \odot V$.
\end{lemma}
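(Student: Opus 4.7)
The plan is to directly evaluate both sides of the claimed identity and match them using the two observations from \autoref{sec:mode_transitions}. First, I would read off $\G{OPT}$: since the last row of the additive correction matrix is all zeros, the bottom row of the leading matrix gives $\G{OPT} = [\half,\half,\half,\half,\half,0]$, and hence
\[
\G{OPT} \odot V \;=\; \half\bigl(A^d + B^d + A^{oe} + B^o + B^{ne}\bigr).
\]
So it suffices to show that the left-hand side $\sum_{y \prec z^*} \Delta w^{z^*y}$ evaluates to the same expression.

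Next I would analyze $\Delta w^{z^*y}$ term by term for each $y \prec z^*$. By \autoref{obs:type_change_causes_cost}, this quantity equals $\half$ exactly when the mode of $\set{z^*,y}$ changes due to the request in step $t$, and equals $0$ otherwise. The subsequent movement of $z^*$ by \ALG may also change the mode (since mode depends on the list ordering), but this is irrelevant here because $w^{z^*y}$ depends only on the work function $W^{z^*y}$, which is a function of $\sigma_{z^*y}$ alone and is unaffected by \ALG's swaps. Thus only the request-induced mode transitions contribute to $\Delta w^{z^*y}$.

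By the second bullet of \autoref{obs:type_change}, for $y \prec z^*$ the request-induced mode transitions are $\alpha \to \beta$, $\beta \to \gamma$, and $\gamma \to \gamma$. Hence the mode changes iff it was $\alpha$ or $\beta$ prior to the request. Recalling that the states $\alpha^n$, $\gamma^d$, $\gamma^o$ never occur, the pairs $\set{z^*,y}$ with $y \prec z^*$ in mode $\alpha$ or $\beta$ are precisely those in states $\alpha^d, \alpha^{oe}, \beta^d, \beta^o, \beta^{ne}$, counted by $A^d, A^{oe}, B^d, B^o, B^{ne}$, while those in mode $\gamma$ are exactly the $\gamma^{ne}$ pairs (count $C^{ne}$) and contribute nothing. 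Summing over all $y \prec z^*$ yields
\[
\sum_{y \prec z^*} \Delta w^{z^*y} \;=\; \half\bigl(A^d + B^d + A^{oe} + B^o + B^{ne}\bigr),
\]
which matches $\G{OPT} \odot V$, proving the lemma. The argument is essentially bookkeeping; the only conceptual point requiring care is the separation of request-induced mode changes (which carry a $\half$ contribution to $w^{z^*y}$) from movement-induced mode changes (which do not), and this is exactly what makes the $\gamma^{ne}$ coordinate of $\G{OPT}$ equal to $0$ rather than $\half$.
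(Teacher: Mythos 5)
Your proof is correct and follows essentially the same route as the paper's: expand $\G{OPT} \odot V$, classify pairs by mode using \autoref{obs:type_change}, and invoke \autoref{obs:type_change_causes_cost} to assign $\half$ per request-induced mode change. The only addition you make is an explicit justification that movement-induced mode changes do not affect $\Delta w^{z^*y}$ (since the work function depends only on $\sigma_{z^*y}$ and not on the list configuration); the paper leaves this implicit in the phrasing of \autoref{obs:type_change_causes_cost}, so your remark is a clarification rather than a new idea.
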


\begin{proof}
    The right-hand side of the lemma relation is equal to 
    $\half ( A^d + B^d + A^{oe} + B^o + B^{ne} )$.
    By \autoref{obs:type_change}, due to request to $z^*$: 
    \begin{itemize}
        \item $A^d + A^{oe}$ pairs change mode from $\alpha$ to $\beta$, and 
        \item $B^d + B^o + B^{ne}$ pairs change mode from $\beta$ to $\gamma$.
    \end{itemize}
    By \autoref{obs:type_change_causes_cost}, each such mode change contributes 
    $\half$ to the left hand side of the lemma equation. Note that 
    $C^{ne}$ pairs of mode $\gamma$ do not change their mode due to the request.
\end{proof}


\begin{lemma}
\label{lem:potential_change_as_product}  
    It holds that $\Delta \ALG + \sum_{y \prec z^*} \Delta \Phi_{z^*y} = G \odot V$,
    where 
    \begin{itemize} 
        \item $G = \G{PM}$ if \ALG performs a partial move, and 
        \item $G = \G{FM}$ if \ALG performs a full move.
    \end{itemize}
\end{lemma}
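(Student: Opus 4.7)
The plan is to decompose the left-hand side into the cost $\Delta \ALG$ incurred by \ALG in step $t$ and the sum of potential changes $\sum_{y \prec z^*} \Delta \Phi_{z^*y}$. These two quantities correspond, respectively, to the first and second matrices in the definitions of $\G{PM}$ and $\G{FM}$, so I would treat them one at a time.

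For the cost contribution, the access cost is $\|V\|_1 = A^d + B^d + A^{oe} + B^o + B^{ne} + C^{ne}$, since this is exactly the number of items preceding $z^*$. For a full move, $z^*$ is brought to the front of the list, requiring one additional swap per preceding item, for a total cost of $2 \|V\|_1$; this matches the second row of the first matrix. For a partial move, $z^*$ is inserted just before $\theta_{z^*}$, so the number of swaps equals the number of preceding items $y$ with $\theta_{z^*} \preceq y$. The key structural observation is that the flavor classification cleanly splits the preceding items along exactly this criterion: flavor $d$ is defined by $y \prec \theta_{z^*}$, while each of the flavors $o$, $e$, and $n$ forces $\theta_{z^*} \preceq y$. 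Hence the partial-move swap count is $A^{oe} + B^o + B^{ne} + C^{ne}$, and the total cost is $A^d + B^d + 2(A^{oe} + B^o + B^{ne} + C^{ne})$, matching the first row of the first matrix.

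For the potential change contribution, I would read off each column from \autoref{tab:state_transitions_1}. For the full move every transition is deterministic, so the per-pair potential change for an initial state $\xi^\omega$ is simply $(\text{new potential}) - \xi^\omega$; for example, $\alpha^d \to \beta^d$ contributes $\beta^d - \alpha^d = \beta^d$, $\beta^d \to \alpha^d$ contributes $-\beta^d$, and so on, reproducing the second row of the second matrix. For the partial move, all transitions are again deterministic except the one from $\alpha^{oe}$, which may land in $\beta^d$, $\beta^o$, or $\beta^{ne}$. Upper-bounding by the worst outcome gives $\max\{\beta^d, \beta^o, \beta^{ne}\} - \alpha^{oe}$, and invoking $\beta^{ne} \leq \beta^o$ from \autoref{def:types_relations} simplifies this to $\max\{\beta^d, \beta^o\} - \alpha^{oe}$, which is exactly column three of the first row. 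Summing each column's contribution weighted by the corresponding coordinate of $V$ gives the second matrix of $G$ applied to $V$.

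The only nontrivial point is the correspondence between the swap count of a partial move and the counts $A^{oe} + B^o + B^{ne} + C^{ne}$; once that identification is established, the remainder of the proof is a direct tabulation of \autoref{tab:state_transitions_1}, together with the monotonicity $\beta^{ne} \leq \beta^o$ used to absorb $\beta^{ne}$ into the maximum. I would then combine both contributions and match them column-by-column against the sum of the two matrices defining $\G{PM}$ and $\G{FM}$, completing the proof.
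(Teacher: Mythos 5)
Your proposal is correct and follows essentially the same decomposition as the paper: split the left-hand side into $\Delta\ALG$ (matching the first summand matrix, with the key observation that the partial-move swap count is exactly $A^{oe}+B^o+B^{ne}+C^{ne}$, i.e., the preceding items $y$ with $\theta_{z^*}\preceq y$) and the potential change (read off column-by-column from Table~\ref{tab:state_transitions_1}, using $\beta^{ne}\le\beta^o$ and $\alpha^d=0$ to simplify the nondeterministic $\alpha^{oe}$ column). Like the paper's own proof, you actually establish $\leq$ rather than $=$ for the partial move (because of the $\max$ over the possible outcomes from $\alpha^{oe}$), which is all that is needed downstream; the statement's ``$=$'' is a slight overclaim in both cases.
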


\begin{proof}
First, assume that \ALG performs a partial move. 
By the definitions 
of $A^d$, $B^d$, $A^{oe}$, $B^o$, $B^{ne}$, and $C^{ne}$,
$\Delta \ALG = [1,1,2,2,2,2] \odot V$.
By \autoref{tab:state_transitions_1}, 
\begin{align*}
    \sum_{y \prec z^*} & \Delta \Phi_{z^*y} \\
        & \leq \big[\beta^{ne} - \alpha^d, 
            \gamma^{ne} - \beta^d, 
            \max\{\beta^d, \beta^o, \beta^{ne}\} - \alpha^{oe}, 
            \alpha^{oe} - \beta^o, 
            \alpha^d - \beta^{ne}, 
            \alpha^d - \gamma^{ne}
        \big] \odot V \\
        & = \left[\beta^{ne}, 
            \gamma^{ne} - \beta^d, 
            \max\{\beta^d, \beta^o\} - \alpha^{oe},
            \alpha^{oe} - \beta^o,
            - \beta^{ne},
            - \gamma^{ne}
        \right] \odot V ,
\end{align*}
where the second equality follows as
$\alpha^d = 0$ and $\beta^o \geq \beta^{ne}$ 
(by \autoref{def:types_relations}).
Thus, the lemma holds for a partial move.

Next, assume \ALG performs a full move.  Then, $\Delta \ALG = [2,2,2,2,2,2] \odot V$.
By \autoref{tab:state_transitions_1},
\begin{align*}
    \sum_{y \prec z^*} \Delta \Phi_{z^*y} 
        & = \left[
            \beta^d - \alpha^d,
            \alpha^d - \beta^d, 
            \beta^d - \alpha^{oe}, 
            \alpha^d - \beta^o, 
            \alpha^d - \beta^{ne},
            \alpha^d - \gamma^{ne}
            \right] 
        \odot V \\
        & = \left[
            \beta^d, 
            -\beta^d, 
            \beta^d - \alpha^{oe}, 
            -\beta^o,
            -\beta^{ne}, 
            -\gamma^{ne}
            \right] 
        \odot V,
\end{align*}
where in the second equality we used $\alpha^d = 0$ (by \autoref{def:types_relations}).
Thus, the lemma holds for a~full move as well.
\end{proof}

Recall that \ALG is defined to choose the move that minimizes 
$\Delta \ALG + \sum_{y \prec z^*} \Delta \Phi_{z^*y}$.

\begin{corollary}
\label{cor:potential_change_as_product}
    It holds that $\Delta \ALG + \sum_{y \prec z^*} \Delta \Phi_{z^*y} = 
    \min \{ \G{PM} \odot V, \G{FM} \odot V\}$.
\end{corollary}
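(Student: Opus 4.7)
The plan is to derive the corollary directly from Lemma \ref{lem:potential_change_as_product} combined with the specification of \ALG in the ``Choosing the cheaper move'' paragraph of Section \ref{sec:algorithm}. That specification declares that \ALG executes whichever of the two possible moves---partial or full---achieves the smaller value of $\Delta_t \ALG + \sum_{y \prec z^*} \Delta_t \Phi_{z^*y}$, breaking ties arbitrarily. Note also that \ALG, having online access to all work functions $W^{xy}$ and to its current list and all targets $\theta_y$, can exactly evaluate both candidate quantities before committing to a move.

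Concretely, I would proceed in two steps. First, I would invoke Lemma \ref{lem:potential_change_as_product} on each of the two hypothetical moves: if \ALG were to execute the partial move, the amortized contribution would equal $\G{PM} \odot V$, and if it were to execute the full move, it would equal $\G{FM} \odot V$. Second, since \ALG by definition commits to whichever move achieves the smaller of these two values, the quantity realized under \ALG's actual choice equals $\min\{\G{PM} \odot V, \G{FM} \odot V\}$, which is exactly the stated identity.

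I do not expect any serious obstacle, since both ingredients are already in place. The only subtlety worth acknowledging is that the third coordinate of $\G{PM}$ reads $\max\{\beta^d, \beta^o\} - \alpha^{oe}$, reflecting that under a partial move an $\alpha^{oe}$ pair can transition to any of $\beta^d$, $\beta^o$, or $\beta^{ne}$ (together with the inequality $\beta^o \geq \beta^{ne}$ guaranteed by \autoref{def:types_relations}), and thus the expression $\G{PM} \odot V$ is in general a worst-case upper bound on the true partial-move contribution rather than an exact value. This is harmless for what follows: the minimum of two valid upper bounds remains a valid upper bound on the amortized contribution that \ALG actually pays, which is precisely the form required to feed into the proof of \autoref{lem:amortized_cost_P1}.
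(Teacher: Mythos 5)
Your proposal is correct and matches the paper's own argument: the corollary is exactly \autoref{lem:potential_change_as_product} applied to both hypothetical moves, combined with the fact that \ALG by definition executes the move minimizing $\Delta \ALG + \sum_{y \prec z^*} \Delta \Phi_{z^*y}$. Your closing caveat is also well taken --- because of the $\max\{\beta^d,\beta^o\}$ entry, $\G{PM}\odot V$ is only an upper bound on the partial-move quantity (the paper's proof of \autoref{lem:potential_change_as_product} indeed uses ``$\leq$'' there despite the stated equality), and, as you note, an upper bound is all that the proof of \autoref{lem:amortized_cost_P1} requires.
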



\subparagraph*{Finding Parameters.}

We may now prove \autoref{lem:amortized_cost_P1},
restated below for convenience.

\amortizedcost*

\begin{proof}
    We choose the following values of the parameters:
    \begin{align*}
    \alpha^d &= 0  & \beta^d &= \onesixteenth (5+3\sqrt{17})\approx 1.086 & \alpha^{oe} &= 2  
    \\
    \beta^o &= \onesixteenth (1+7\sqrt{17}) \approx 1.866 & \beta^{ne} &=  \onesixteenth(9-\sqrt{17})  \approx 0.305  &  \gamma^{ne} &= 2
    \end{align*}
    It is straightforward to verify that these values 
    satisfy the conditions of \autoref{def:types_relations}.
    We note that relation $\alpha^{oe} \leq \beta^{ne} + \half R$ holds with equality.

    By \autoref{cor:potential_change_as_product},
    $\Delta \ALG + \sum_{ \set{x,y} \in \P_1} \Delta \Phi_{xy} 
    = \Delta \ALG + \sum_{y \prec z^*} \Delta \Phi_{z^*y} = 
    \min \{ \G{PM} \odot V, \G{FM} \odot V\}$.
    On the other hand, by \autoref{lem:wf_as_product},
    $\sum_{ \set{x,y} \in \P_1} \Delta w^{xy} = 
    \sum_{y \prec z^*} \Delta w^{z^*y} = \G{OPT} \odot V$.
    Hence, it remains to show that 
    $\min \{ \G{PM} \odot V, \G{FM} \odot V\} \leq R \cdot \G{OPT} \odot V$.
    
    We observe that
    \begin{alignat*}{10}
        \G{PM} & = 
            \onesixteenth \cdot \big[
                25-\sqrt{17},\; 
                && 43-3\sqrt{17}, \; 
                && 1+7\sqrt{17},\; 
                && 63-7\sqrt{17},\; 
                && 23+\sqrt{17},\;
                && 0
            \big], \\
        \G{FM} & = 
            \onesixteenth \cdot \big[
               37+3\sqrt{17},\; 
               && 27-3\sqrt{17},\; 
               && 5+3\sqrt{17},\; 
               && 31-7\sqrt{17},\;
               && 23+\sqrt{17},\;
               && 0
            \big]. 
    \end{alignat*}
    Let $c = \frac{1}{4} (\sqrt{17}-1)$ and let $\G{COMB} = c \cdot \G{PM} + (1-c) \cdot \G{FM}$.
    Then, 
    \begin{align*}
        \G{COMB} = 
            \onesixteenth \cdot \big[
                23+\sqrt{17},\; 
                23+\sqrt{17}, \; 
                23+\sqrt{17},\; 
                23+\sqrt{17},\; 
                23+\sqrt{17},\;
                0
            \big].
    \end{align*}
    Now for any vector $V$, 
    \begin{align*}
        \min \{ \G{PM} \odot V, \G{FM} \odot V \} 
        & \leq c \cdot \G{PM} \odot V + (1-c) \cdot \G{FM} \odot V \\
        & = (c \cdot \G{PM} + (1-c) \cdot \G{FM}) \odot V \\
        & = \G{COMB} \odot V 
         = R \cdot \G{OPT} \odot V,
    \end{align*}
\noindent
completing the proof.\qedhere
\end{proof}


\section{Final Remarks}
\label{sec:discussion}

The most intriguing question left open in our work is whether competitive ratio
of $3$ can be achieved. We have shown computationally
(see~\autoref{sec:computational}) that $3$-competitive algorithms exist for
lists with up to $6$ items. 

However, even for short lists the definition of such $3$-competitive algorithm
remains elusive. For many online problems, the most natural candidate is the
generic work function algorithm. This algorithm is $2$-competitive in the
$\probLUPstandard$ model~\cite{1999_anderson_lup_work_function}. However, our
computer-aided calculation of its performance shows that its ratio is larger than
$3$ already for $5$ items (see~\autoref{sec:computational}).
It is $3$-competitive for lists of length up to $4$, though.

We do not know whether the analysis of $\ALG$ is tight. 
For the specific
choice of parameters used in the paper, we verified that $\ALG$ is
$3$-competitive for lists of length $3$ (see~\autoref{sec:length-3}), but not
better than $3.04$-competitive for lists of length $5$
(see~\autoref{sec:lower-fpm}). 

The focus of this paper is on the $\probLUPuniform$ model (also known as $P^1$);
we believe that the setting of $d = 1$ captures the essence and hardness 
of the deterministic variant. That said, extending the definition and analysis of $\ALG$ to the
$P^d$ model (for arbitrary $d$) is an~interesting open problem that 
deserves further investigation. 


\bibliography{list_update_references.bib}


\appendix

\section{Handling insertions and deletions}
\label{sec:dynamic}

In the ``dynamic'' variant of List Update, 
except from access requests, the input $\sigma$ may contain
insertions and deletions of items. By the variant
definition (see, e.g.,~\cite{2008_albers_lup_EfoA}): 
\begin{itemize}
\item inserting $z^*$ places it at the end of the list, and its cost is defined
    to be equal to the current length of the list;
\item deleting $z^*$ involves accessing $z^*$ (at a cost equal to its position),
    and then removing it.
\end{itemize}
Right after the insertion, an algorithm may reorganize the list 
paying the usual cost of swaps. Similarly, such reorganization
is possible right after accessing the item to be deleted, 
and before its actual removal.

In this section, we demonstrate that \ALG can handle insertions and deletions
while retaining its competitive ratio.
We start with a few definitions and clarifications, showing how the arguments
for lower-bounding \OPT need to be adapted for the dynamic variant. Later, we
we specify what swaps \ALG performs due to insertions and deletions.

\subparagraph{Adjusted cost.}

Let \emph{adjusted cost} of an algorithm be the cost
that omits the cost of serving insertions of items. (It does involve the cost of
an optional list reorganization). As these costs are 
the same for both \ALG and \OPT, if we can show that an algorithm is
$R$-competitive with respect to the adjusted cost, then it is also
$R$-competitive with respect to the actual cost. From this point on, we use only
adjusted cost in our analysis.

\subparagraph{Lifetimes of pairs.}

Without loss of generality, we assume that when an item is deleted, it is never
inserted back to the list (we simply rename its next occurrences). In the
main part of the paper, we used $\P$ to denote the set of all pairs of items.
For the dynamic variant, we will use $\Q$ to denote the set of all pairs of
items that are ever together in the list.
We call a~pair $\set{x,y} \in \Q$
\begin{itemize}
\item \emph{dormant} before $x$ and $y$ are both present in the list;
\item \emph{active} when both $x$ and $y$ are present in the list;
\item \emph{inactive} after either $x$ or $y$ is deleted from the list.
\end{itemize}
Each pair of $\Q$ is active within a certain period. This period may be preceded by
the dormant one, and may be followed by the inactive one. (In the static case,
all pairs are always active.)

\subparagraph{Pair-based OPT.}

Previously, we related both \ALG and \OPT to optimal algorithms for sequences
$\sigma_{xy}$ operating on two-item lists containing $x$ and $y$. In the
dynamic variant, we define $\sigma_{xy}$ to contain requests only from the
period when the pair $\set{x,y}$ is active. The corresponding two-item list
is also defined only within this period. The initial order of items on
this two-item list matches the order of $x$ and~$y$ on the actual
list at the beginning of the active period. This implies that each pair starts its 
active period in mode $\alpha$.

With these definitions in place, \autoref{lem:pairwise_opt_vs_opt} extends in
a~straightforward way to the dynamic case, i.e., $\sum_{\set{x,y} \in \Q}
\OPT(\sigma_{xy}) \leq \OPT(\sigma)$ for each input sequence $\sigma$. This also
implies a~counterpart of \autoref{lem:wf_vs_opt} for the dynamic variant: for
any input sequence $\sigma$ consisting of $T$ requests, it holds that
$\sum_{t=1}^T \sum_{\set{x,y} \in \Q} \Delta_t w^{xy} \leq \OPT(\sigma)$.

\subparagraph{Potential function.}

Recall that the potential function $\Phi_{xy}$ for a pair $\set{x,y}$ depends on
the corresponding work function $W^{xy}$, which is defined only for active
pairs. Therefore, for completeness, we set the potential function for a dormant
pair to be zero, and assume that the value of the potential function is frozen
once a pair stops being active and becomes inactive.

\subparagraph{Specifying the behavior of the algorithm.}

The behavior of \ALG for access requests remains the
same as in the static case. 
For insertions, \ALG performs no list reorganization. 
Finally, suppose that an element $z^*$ is deleted. 
Upon such request, \ALG performs exactly the same list reorganization as if 
$z^*$ was accessed (i.e., it moves $z^*$ fully or partially towards 
the front of the list). Once this action is performed, due 
to the target cleanup operation, $z^*$~is not a~target for any other item in the list.

Admittedly, the behavior of \ALG for deletions is wasteful
as it moves the element which is about to be removed from the list.
However, this definition streamlines the analysis, as we can 
reuse the arguments for access requests.

\subparagraph{Competitive ratio.}

We first argue about potential changes due to insertions and deletions.
Later, we use these arguments to show that \ALG retains 
its competitive ratio also in the dynamic variant.

We split the action of deleting an item $z^*$ into two parts: accessing the item (with
induced list reorganization) and \emph{actual removal} ot this item from the
list. 

\begin{lemma}
    \label{lem:insertion_potential} 
    Fix a pair $\set{x,y} \in \Q$. Due to an insertion or an actual removal,
    $\Delta \Phi_{xy} = 0$.
\end{lemma}

\begin{proof}
    Let $z^*$ be the item either inserted or actually removed; 
    we consider two cases.
    
    In the first case, pair $\set{x,y}$ does not contain $z^*$.
    If pair $\set{x,y}$ remains dormant or inactive,
    then $\Phi_{xy}$ is trivially unaffected. Otherwise, 
    the pair must be and remain active. 
    It suffices to note that 
    neither the insertion of $z^*$ nor the actual removal of $z^*$ 
    does affect $\theta_x$ or $\theta_y$.
    Thus, the mode and flavor of $\set{x,y}$ remains unchanged, and 
    hence $\Delta \Phi_{xy} = 0$.
    
    In the second case, one of pair elements, say $x$, is equal to $z^*$.
    \begin{itemize}
    \item For insertions, this means that pair $\set{z^*,y}$ 
    switches from the dormant state (of zero potential)
    to the active one. 
    As noted above, $\set{z^*,y}$ starts in mode $\alpha$.
    Since \ALG sets $\theta_{z^*} = z^*$, it holds that 
    $\theta_y \preceq y \prec \theta_{z^*} = z^*$, i.e., pair $\set{z^*,y}$ 
    starts in flavor $d$.
    Hence, the state of $\set{z^*,y}$ right after the insertion 
    is $\alpha^d$ which corresponds to the zero potential. 
    Thus, $\Delta \Phi_{xy} = 0$.
    \item For actual removals, pair $\set{z^*,y}$ switches from 
    the active state to the inactive one. 
    The pair potential becomes frozen, and thus $\Delta \Phi_{xy} = 0$.
    \qedhere
    \end{itemize}
\end{proof}

\begin{theorem}
\label{thm:main_theorem_dynamic}
    The competitive ratio of \ALG in the dynamic variant is 
    at most $R = \oneeighth (23+\sqrt{17}) \leq 3.3904$.
\end{theorem}

\begin{proof}
Fix any sequence $\sigma$ consisting of $T$ requests.
By \autoref{thm:main_theorem}, for any step $t$ containing 
an access request, we have that
\begin{equation}
    \label{eq:main_theorem_dynamic}
        \Delta_t \ALG + \sum_{ \set{x,y} \in \Q} \Delta_t \Phi_{xy} 
            \leq R \cdot \sum_{ \set{x,y} \in \Q} \Delta_t w^{xy}.
\end{equation}
We observe that the initial potentials for all pairs of $\Q$ (including the
dormant ones) are zero. Thus, to complete the proof, it remains to show that
\eqref{eq:main_theorem_dynamic} holds for insertions and deletions. The
competitive ratio of \ALG then follows by summing
\eqref{eq:main_theorem_dynamic} over all steps $t$, using the non-negativity of pair
potentials, and finally applying the relation $\sum_{t=1}^T \sum_{ \set{x,y} \in
\Q} \Delta w^{xy} \leq \OPT(\sigma)$ as in the proof of
\autoref{thm:main_theorem}.

For insertions, by the definition of adjusted cost, we have that $\Delta_t
\ALG = \Delta_t \OPT = 0$. By \autoref{lem:insertion_potential}, 
$\Delta_t \Phi_{xy} = 0$ for any pair $\set{x,y} \in \Q$,
and thus \eqref{eq:main_theorem_dynamic} holds.

For deletions, we can split the action into two parts: 
accessing the item (and the resulting list reorganization) 
and actual removal of the item.
For the first part, \eqref{eq:main_theorem_dynamic} holds
by \autoref{thm:main_theorem}. For the actual removal,
$\Delta_t \ALG = \Delta_t \OPT = 0$.
Furthermore, \autoref{lem:insertion_potential} implies that 
$\Delta_t \Phi_{xy} = 0$ for any pair $\set{x,y} \in \Q$,
and thus \eqref{eq:main_theorem_dynamic} holds.
\end{proof}

\section{Lower Bounds on MTF and its Modifications}
\label{sec:lower}

In this section, we show the lower bounds greater than $3$ 
for several natural algorithms, most of which have previously been considered for this problem. 
All these bounds (unless explicitly stated) hold for the full cost model, and thus also for the partial one.


\subsection{Specific Algorithms}

\subparagraph*{Deterministic \BIT.}

Algorithm Deterministic \BIT (\DBIT) starts with all items unmarked. 
On a request to item $r$, if it is marked, it moves $r$ to 
the front and unmarks it. Otherwise, it marks $r$.
We show that the competitive ratio of \DBIT is at least $3.302$.

To this end, consider a list of length $n$ and denote its items as $x_0, x_1,
\ldots, x_{n-1}$, counting from the front of the list. We divide the list into
two parts, $A$ and $B$, where part $A$ has length $cn$ and part $B$ has
length $(1-c)n$. The value of parameter $c$ will be determined later. Formally
$A = (x_0, x_1, x_2, \ldots, x_{cn-1})$ and $B = (x_{cn}, x_{cn+1}, \ldots,
x_{n-2}, x_{n-1} )$. For succinctness, we denote such order of the list as
$A, B$. We define input
\begin{align*}
    \sigma = (  &x_{cn-1}, x_{cn-2}, \ldots, x_1, x_0,\\
                &x_{n-1}, x_{n-1}, x_{n-2}, x_{n-2}, x_{n-3}, x_{n-3}, \ldots, x_{cn}, x_{cn} )^2.
\end{align*}
In this sequence, all items of $A$ are requested first, starting from the one
furthest in the list. Then every item of $B$ is requested twice, also starting
from the furthest one. It results in items of $B$ being transported to the
front, thus after the first half of $\sigma$, the order of the list of \DBIT is
$B,A$. Note that order of items within $B$ does not change.

Then the same sequence of requests is repeated. The difference is that now
requesting any item of $A$ results in moving that item to the front. Then
items of $B$ are again requested twice, resulting in final list of algorithm
being $B, A$. Then, 
\begin{align*}
\DBIT(\sigma) 
    & = \frac{cn(cn+1)}{2} && \text{(access to $A$)} \\
    & \quad + (1-c)n \cdot (3n-1) && \text{(access to $B$ and movement)}\\
    & \quad + cn \cdot (2n-1) && \text{(access to $A$ and movement)}\\
    & \quad + (1-c)n \cdot (3n-1) && \text{(access to $B$ and movement)} \\
    & = n^2 \left(6 - 4c + \frac{c^2}{2} \right) + n \left( \frac{3c}{2} - 2 \right).
\end{align*}

Now we define a better algorithm for this sequence, which upper-bounds the cost
of \OPT. After receiving the first batch of
requests to the items of $A$, it moves all the items of $B$ before the items of
$A$. It does not performs any further swaps. Thus, 
\begin{align*}
\OPT(\sigma) 
    & \leq \frac{cn(cn+1)}{2} && \text{(access to $A$)} \\
    & \quad + cn \cdot (1-c)n  && \text{(swap $A$ with $B$)}\\
    & \quad + 2 \cdot \frac{1+(1-c)n}{2} \cdot (1-c)n && \text{(access to $B$)}\\
    & \quad + \frac{(1-c)n+1+n}{2} \cdot cn  && \text{(access to $A$)} \\
    & \quad + 2 \cdot \frac{1+(1-c)n}{2} \cdot (1-c)n && \text{(access to $B$)}\\
    & = n^2 \left(2 - 2c + c^2 \right) + n \left( 2-c\right).
\end{align*}
Note that the final order of the list in both algorithms is the same and
equal to $B, A$. Moreover, every item is requested an even number of times,
so no item is marked after $\sigma$. Thus, this request sequence can be
repeated infinitely many times, with sets $A, B$ redefined after each
sequence.

With a long enough list, $n$ becomes insignificant compared to $n^2$,
and therefore the ratio is $\DBIT(\sigma) / \OPT(\sigma) \geq (6 -
4c + \frac{1}{2}c^2) / (2 - 2c + c^2)$. For $c = \frac{1}{3} (5 - \sqrt{13})$, we have 
$\DBIT(\sigma) / \OPT(\sigma) \geq \frac{1}{2} (3 + \sqrt{13}) \geq
3.302$.


\subparagraph*{Deterministic \BIT in partial cost model.} 

We now show that the competitive ratio of \DBIT is at least $4$
in the partial cost model. We 
recursively construct a sequence that achieves this ratio for every length of
the list $n \ge 2$. For a list of length $n$, the created sequence will 
be denoted $\sigma_n$. For reasons that will become apparent soon, we index items
from the end of the list, namely the algorithm's list at the start of the
sequence is $x_{n-1}, x_{n-2}, \ldots, x_1, x_0$.

First, assume $n = 2$. Set
\begin{equation*}
    \sigma_2 = x_1, x_0, x_0, x_1, x_0, x_0.
\end{equation*}
Clearly, $\DBIT(\sigma_2) = 8$. The cost of $2$ can be achieved for that sequence by
starting with the list order $x_1, x_0$ and swapping items only after the
first request. Note that such an algorithm would have the same order of items 
as algorithm \DBIT after serving $\sigma_2$.

Now let $n > 2$. Set 
\begin{equation*}
\sigma_n \;=\; x_{n-1}, \sigma_{n-1}, x_{n-1}, \sigma'_{n-1}.
\end{equation*}
where by $\sigma'_{n-1}$ above we mean the sequence obtained from $\sigma_{n-1}$
by permuting the items $x_{n-2},\dots,x_{0}$ according to their position in the
list after executing $\sigma_{n-1}$. Specifically, if $x_{\pi(n-2)}, x_{\pi(n-3)},
\dots, x_{\pi(0)}$ is the list after executing $\sigma_{n-1}$ then 
$\sigma'_{n-1}$ is obtained from $\sigma_{n-1}$ by replacing each item $x_i$ by
$x_{\pi(i)}$. For example, for $n=3$, 
\begin{equation*}
    \sigma_3 \;=\; x_2 , x_1 ,  x_0  , x_0 , x_1 ,  x_0  , x_0 , x_2, x_0 ,  x_1  , x_1 , x_0 ,  x_1  , x_1 .
\end{equation*}

We now express $\DBIT(\sigma_n)$ and $\OPT(\sigma_n)$ as functions of 
$\DBIT(\sigma_{n-1})$ and $\OPT(\sigma_{n-1})$, respectively.

At the start of $\sigma_n$, item $x_{n-1}$ is at the front, marked because of
the first request. Within $\sigma_{n-1}$ every item other than $x_{n-1}$ is
requested even number of times (at least twice). For the first pair of such
requests, $x_{n-1}$ contributes $3$ to the cost (one per each request and one
for the swap when the requested item is moved to the front). Thus, the
additional cost due to $x_{n-1}$ is at least $3(n-1)$. Then, item $x_{n-1}$ is
requested and moved to the front incurring cost $2(n-1)$. Similarly, $x_{n-1}$
contributes an additional cost of $3(n-1)$ to $\sigma'_{n-1}$. Thus,
\begin{equation}\label{eq:dbit1}
\DBIT(\sigma_n) \;\ge\; 2\cdot \DBIT(\sigma_{n-1}) + 8\cdot (n-1).
\end{equation}

The optimal algorithm can serve the first request to $x_{n-1}$ at cost $0$ and then move it
immediately to the end, paying $2(n-1)$ for the second request to $x_{n-1}$. 
In the final list of \OPT, $x_{n-1}$ is at the end, just like in the list 
of \DBIT. This implies that
serving $\sigma_{n-1}$ and $\sigma'_{n-1}$ will cost at most $2 \cdot
\OPT(\sigma_{n-1})$, i.e.,
\begin{equation}\label{eq:dbit2}
\OPT(\sigma_n) \;\le\; 2\cdot  \OPT(\sigma_{n-1}) + 2\cdot (n-1).
\end{equation}
Recurrence relations \eqref{eq:dbit1} and \eqref{eq:dbit2}  immediately imply that $\DBIT(\sigma_n)/\OPT(\sigma_n) \ge 4$. 


\subparagraph*{Half-Move.}
Finally, we consider the algorithm Half-Move, which moves the requested item to
the middle point between its position and the front of the list. We show that
its  competitive ratio  is at least $6$.

Consider a list of even length $n$. Denote the items as $x_0, x_1, \ldots,
x_{n-1}$, according to their positions in the initial list of the algorithm. The
request sequence is defined as
\[ 
    \sigma = \left( x_{n-1}, x_{n-2}, x_{n-3}, \dots , x_{n/2} \right)^k
\]
for some $k \geq 1$. At
the time of the request, the requested item is at the end of the list and it is
moved to the middle of the list. Since in the list of even length there are two
middle positions, we will assume that the item is always moved to the one
further from the front of the list. (If it is moved to the other one, the proof
only needs to be slightly modified.) The cost of each request for the
algorithm is $n + \frac{1}{2}n - 1 = \frac{3}{2}n - 1$. Moreover, after
$\frac{1}{2}n$ requests, the list returns to its original order. Therefore, the
total cost of the sequence is $k \cdot \frac{1}{2}n \cdot (\frac{3}{2}n - 1)$.

It is possible to change the order of the list to $x_{n/2}, x_{n/2+1}, \ldots,
x_{n-1}, x_0, x_1, x_2, \ldots, x_{n/2-1}$ before any requests arrive, at cost
less than $n^2$. Thus, the total cost of the request
sequence is $k \cdot \binom{n/2}{2}$. Thus, $\OPT(\sigma)
\leq n^2 + \frac{1}{8}kn^2$. The ratio approaches $6$ for
sufficiently large $n$, as $k \rightarrow \infty$.


\subsection{Stay-or-MTF Class}
\label{subsec:stay-or-mtf}

We now consider a class of online algorithms that never make partial moves. That
is, when an item $x$ is requested, an algorithm either does not change the
position of $x$ or moves $x$ to the front. We refer to such
algorithms as \emph{Stay-or-MTF} algorithms.

We show that if $A$ is a Stay-or-MTF algorithm, then the competitive ratio of
$A$ is at least~$3.25$, even for a list of length $3$.

The adversary strategy is illustrated in the figure below. The list has three
items named $a$, $b$ and $c$. The vertices represent states of the game. In each
state, we assume that the $A$'s list is $abc$. When $A$ executes a
move, the items are appropriately renamed. Each state is specified by the
current \emph{offset function}, equal to the current
work function minus the minimum of this work function. At each move, the
increase of the minimum of the work function is charged towards the \OPT's 
cost, and the offset function is updated. Each offset function that appears in
this game is specified by the set of permutations of $a,b,c$ where its value is
$0$. (Its value at each other list is equal to the minimum swap distance to some
offset function with value $0$.) To specify this set, we indicate, using braces,
which items are allowed to be swapped. For example, $a\braced{bc}$ represents
two permutations, $abc$ and $acb$, and $\braced{abc}$ represents all
permutations.

The adversary strategy is represented by the transitions between the states. For
each state the adversary specifies the generated request when this state is
reached. Notation $r/\xi$ on an arrow means that the request is $x$ and the
adversary cost is $\xi$. Then, the algorithm decides to make a move. The
decision points of the algorithm are indicated by small circles. The type of
move is uniquely specified by the cost, specified on the edges from these
circles. For example, if the request is on $c$, the algorithm either stays,
which costs $2$, or moves to front, which costs $4$.

\begin{center}
    \includegraphics[width=3.5in]{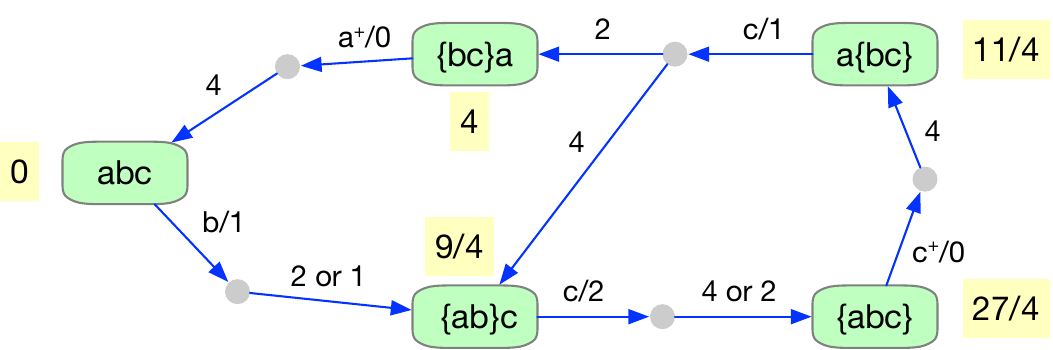}
\end{center}

The numbers next to each state are the values of potential $\Psi$.
By routine verification of the transition graph, 
we can check that for each move we have
\begin{equation*}
    \Delta A +\Delta\Psi \; \ge\; \thirteenfourths \cdot \Delta\OPT,
\end{equation*}
where $ \Delta A$, $\Delta\Psi$ and $\Delta\OPT$ represent the cost of $A$, the potential
change, and the optimum cost for this move.
This shows that the competitive ratio of $A$ is at least $\thirteenfourths = 3.25$.

Our construction is tight for $n = 3$, i.e., it can be shown that there exists a
Stay-or-MTF algorithm with the competitive ratio of $3.25$ for $n = 3$. For $n=4$,
we can improve the lower bound to $3.3$, using an approach that is essentially 
the same, but requires many more states.

\section{A 3-Competitive Algorithm for Lists of Length 3}
\label{sec:length-3}


The algorithm \ALG is $3$-competitive for lists of length 3. This is true even
against our lower bound $\sum_{\set{x,y} \in \P} w^{xy}$ on the optimum cost.

To prove that, we consider possible states of \ALG. There are 15 of them. We
assign a~potential $\Psi$ to every state such that for any state
and for any request, it holds that $\Delta \ALG + \Delta \Psi \leq 3 \cdot
\sum_{\set{x,y} \in \P} \Delta w^{xy}$.  Note that $\Psi$ is introduced only to
simplify the proof: it is not related to the potential used internally by 
the algorithm \ALG.

The states are shown in~\autoref{tab:lb3}. In each state we assume that the
list is $abc$; the items are renamed after a transition if
needed. Arrows represent modes: an arrow pointing from an item $x$ towards an
item $y$ indicates that the pair $\{x, y\}$ is in mode $\alpha$ (if $x$ is
before $y$ in the list) or $\gamma$ (otherwise). If there is no arrow between 
$x$ and $y$, the pair is in mode
$\beta$. The column ``move'' specifies what  \ALG  does after the request, while
the column ``dest.'' denotes the index of the destination state after movement. State changes
correspond to the moves of~\ALG. Cost of the pair-based \OPT is denoted $\Delta
\OPT_P = \sum_{\set{x,y} \in \P} \Delta w^{xy}$.

\begin{table}
    
    \begin{tabular}{|c|c|c|c|c|c|c|c|c|}
    \hline
       \textbf{state} & \textbf{idx} & $\Psi$ & \textbf{req.} & \textbf{move} & \textbf{dest.} & $\Delta \ALG$ & $ \Delta \Psi$ & $\Delta \OPT_P$\\
    \hline \hline 
    \multirow{3}{*}{\includegraphics[width=20mm]{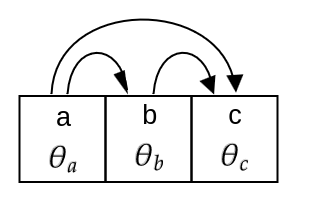}} & \multirow{3}{*}{$0$} & \multirow{3}{*}{$0$} &  $a$& \FRONT & $0$ & $0$ & $0$ & $0$ \\ \cline{4-9} 
                      &                    && $b$ &  $\Theta_b$& $1$ &$1$ &  $\half$ & $\half$ \\ \cline{4-9} 
                      &                    && $c$ &  $\Theta_c$& $2$ & $2$ & $1$ & $1$ \\ \hline
    \multirow{3}{*}{\includegraphics[width=20mm]{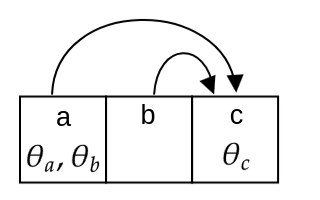}} & \multirow{3}{*}{$1$} & \multirow{3}{*}{$\half$} & $a$ & \FRONT &  $0$& $0$ &$-\half$ & $\half$ \\ \cline{4-9} 
                      &                    && $b$ & \FRONT & $0$ & $2$ &$-\half$ & $\half$ \\ \cline{4-9} 
                      &                    && $c$ & $\Theta_c$ & $3$ &$2$ &$1$ & $1$ \\ \hline
    \multirow{3}{*}{\includegraphics[width=20mm]{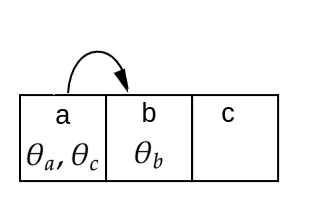}} & \multirow{3}{*}{$2$} & \multirow{3}{*}{$1$} & $a$ & \FRONT & $4$ & $0$ &$- \half$& $\half$ \\ \cline{4-9} 
                      &                    && $b$ & $\Theta_b$ & $5$ & $1$ &$1$ & $1$ \\ \cline{4-9} 
                      &                    && $c$ & \FRONT & $0$ & $4$ & $-1$ & $1$ \\ \hline
    \multirow{3}{*}{\includegraphics[width=20mm]{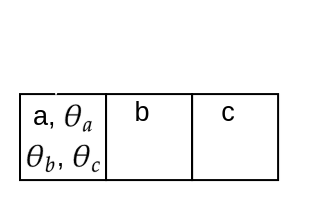}} & \multirow{3}{*}{$3$} &\multirow{3}{*}{$\frac{3}{2}$} & $a$ & \FRONT & $4$ & $0$ & $-1$ & $1$ \\ \cline{4-9} 
                      &                    && $b$ & \FRONT & $4$ & $2$ & $-1$ & $1$ \\ \cline{4-9} 
                      &                    && $c$ & \FRONT & $4$ & $4$ & $-1$ & $1$ \\ \hline
    \multirow{3}{*}{\includegraphics[width=20mm]{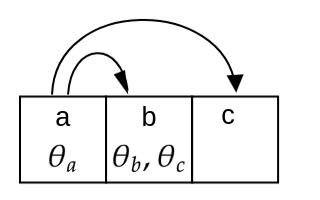}} & \multirow{3}{*}{$4$} & \multirow{3}{*}{$\half$} & $a$ & \FRONT & $4$ & $0$ & $0$ & $0$ \\ \cline{4-9} 
                      &                    && $b$ & $\Theta_b$ & $1$ & $1$ & $0$ & $1$ \\ \cline{4-9} 
                      &                    && $c$ & $\Theta_c$ & $1$ & $3$ & $0$ & $1$ \\ \hline
    \multirow{3}{*}{\includegraphics[width=20mm]{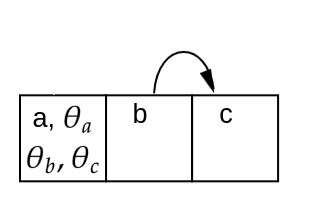}} & \multirow{3}{*}{$5$} & \multirow{3}{*}{$2$} & $a$ & \FRONT & $6$ & $0$ & $0$ & $1$ \\ \cline{4-9} 
                      &                    && $b$ & \FRONT & $4$ & $2$ &$-\frac{3}{2}$ & $\half$ \\ \cline{4-9} 
                      &                    && $c$ & \FRONT & $7$ & $4$ & $-1$ & $1$ \\ \hline
    \multirow{3}{*}{\includegraphics[width=20mm]{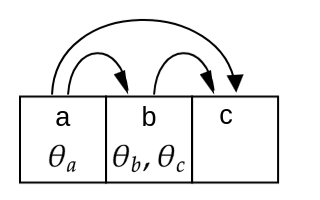}} & \multirow{3}{*}{$6$} &\multirow{3}{*}{$2$} & $a$ & \FRONT &  $6$ & $0$ & $0$ & $0$ \\ \cline{4-9} 
                      &                    && $b$ & $\Theta_b$ &  $1$& $1$&$-\frac{3}{2}$ & $\half$ \\ \cline{4-9} 
                      &                    && $c$ & $\Theta_c$ &  $8$& $3$ &$0$ & $1$ \\ \hline
    \multirow{3}{*}{\includegraphics[width=20mm]{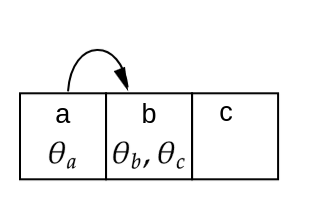}} & \multirow{3}{*}{$7$} & \multirow{3}{*}{$1$} & $a$ & \FRONT & $4$ & $0$&$-\half$ & $\half$ \\ \cline{4-9} 
                      &                    && $b$ & $\Theta_b$ & $9$ & $1$ & $1$& $1$ \\ \cline{4-9} 
                      &                    && $c$ & \FRONT & $0$ & $4$ &$-1$ & $1$ \\ \hline
    \multirow{3}{*}{\includegraphics[width=20mm]{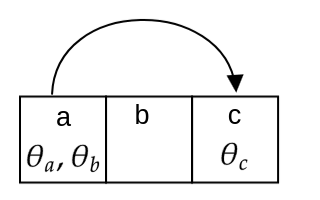}} & \multirow{3}{*}{$8$} & \multirow{3}{*}{$2$} & $a$ & \FRONT & $10$ & $0$ & $-\half$ & $\half$ \\ \cline{4-9} 
                      &                    && $b$ & \FRONT & $0$ &$2$& $-2$ & $1$ \\ \cline{4-9} 
                      &                    && $c$ & $\Theta_c$ & $11$ & $2$& $1$ & $1$ \\ \hline
    \multirow{3}{*}{\includegraphics[width=20mm]{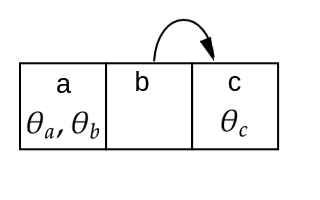}} & \multirow{3}{*}{$9$} & \multirow{3}{*}{$2$} & $a$ & \FRONT & $0$ & $0$ & $ -2$ & $1$ \\ \cline{4-9} 
                      &                    && $b$ & \FRONT & $10$ & $2$ & $-\half$ & $\half$ \\ \cline{4-9} 
                      &                    && $c$ & $\Theta_c$ &  $12$& $2$ & $1$& $1$ \\ \hline
    \multirow{3}{*}{\includegraphics[width=20mm]{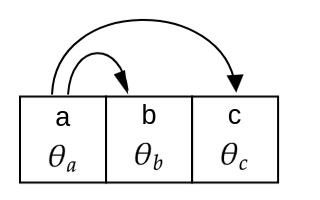}} & \multirow{3}{*}{$10$} & \multirow{3}{*}{$\frac{3}{2}$} & $a$ & \FRONT & $10$ & $0$ & $0$ & $0$ \\ \cline{4-9} 
                      &                     && $b$ & $\Theta_b$ &  $1$& $1$ & $-1$ & $1$ \\ \cline{4-9} 
                      &                     && $c$ & $\Theta_c$ &  $13$& $2$ &  $1$ & $1$ \\ \hline
    \multirow{3}{*}{\includegraphics[width=20mm]{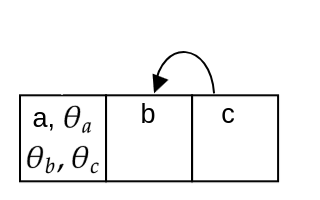}} & \multirow{3}{*}{$11$} & \multirow{3}{*}{$3$} & $a$ & \FRONT &  $14$ &$0$ & $ -1$ & $1$ \\ \cline{4-9} 
                      &                     && $b$ & \FRONT & $7$ &$2$ & $-2$ &$ 1$ \\ \cline{4-9} 
                      &                     && $c$ & \FRONT &$4$  &$4$ & $-\frac{5}{2}$ & $\half$ \\ \hline
    \multirow{3}{*}{\includegraphics[width=20mm]{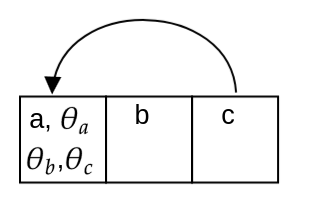}} & \multirow{3}{*}{$12$} & \multirow{3}{*}{$3$} & $a$ & \FRONT & $7$ &$0$ & $ -2$ & $1$ \\ \cline{4-9} 
                      &                     && $b$ & \FRONT & $14$ &$2$ & $-1$ & $1$ \\ \cline{4-9} 
                      &                     && $c$ & \FRONT & $4$ & $4$ & $-\frac{5}{2}$ & $\half$ \\ \hline
    \multirow{3}{*}{\includegraphics[width=20mm]{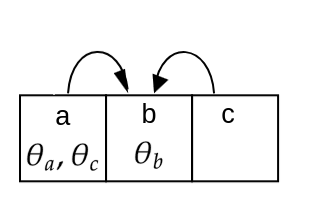}} & \multirow{3}{*}{$13$} & \multirow{3}{*}{$\frac{5}{2}$} & $a$ & \FRONT & $14$ & $0$ &$-\half$ & $\half$ \\ \cline{4-9} 
                      &                     && $b$ & $\Theta_b$ & $3$ &$1$ & $-1$ & $1$ \\ \cline{4-9} 
                      &                     && $c$ & \FRONT & $0$ &$4$ & $-\frac{5}{2}$ & $\half$ \\ \hline
    \multirow{3}{*}{\includegraphics[width=20mm]{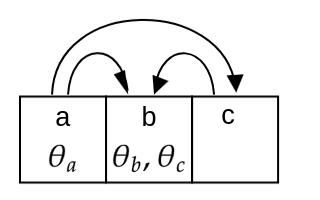}} & \multirow{3}{*}{$14$} &\multirow{3}{*}{$2$} & $a$ & \FRONT & $14$ & $0$ &$0$& $0$ \\ \cline{4-9} 
                      &                     && $b$ & $\Theta_b$ &  $8$ & $1$ & $0$ & $1$ \\ \cline{4-9} 
                      &                     && $c$ & $\Theta_c$ & $1$  &$3$ & $-\frac{3}{2}$ & $\half$ \\ \hline
        
    \end{tabular}
    \caption{\ALG is 3-competitive for lists of length 3}
    \label{tab:lb3}
\end{table}

\medskip

There exists an easier $3$-competitive algorithm for $3$ items. This algorithm
uses $6$ states to keep track of the past history. The algorithm is illustrated
below as a graph whose vertices represent the current state of the algorithm and
edges represent its transitions. Again, items are renamed after each transition
so that the list is always $abc$. The modes of each pair are
represented the same way as in~\autoref{tab:lb3}. Each edge is labeled by the
requested item, the algorithm's cost, and the optimum cost ($\half$ for each
mode change). The algorithm's move is implied by the cost value. For example, if
the request is $c$ and the cost is $3$, it means that the algorithm
paid $2$ to access  $c$ and then swapped $c$ with the
preceding $b$.
 
\begin{center}
\includegraphics[width=4.5in]{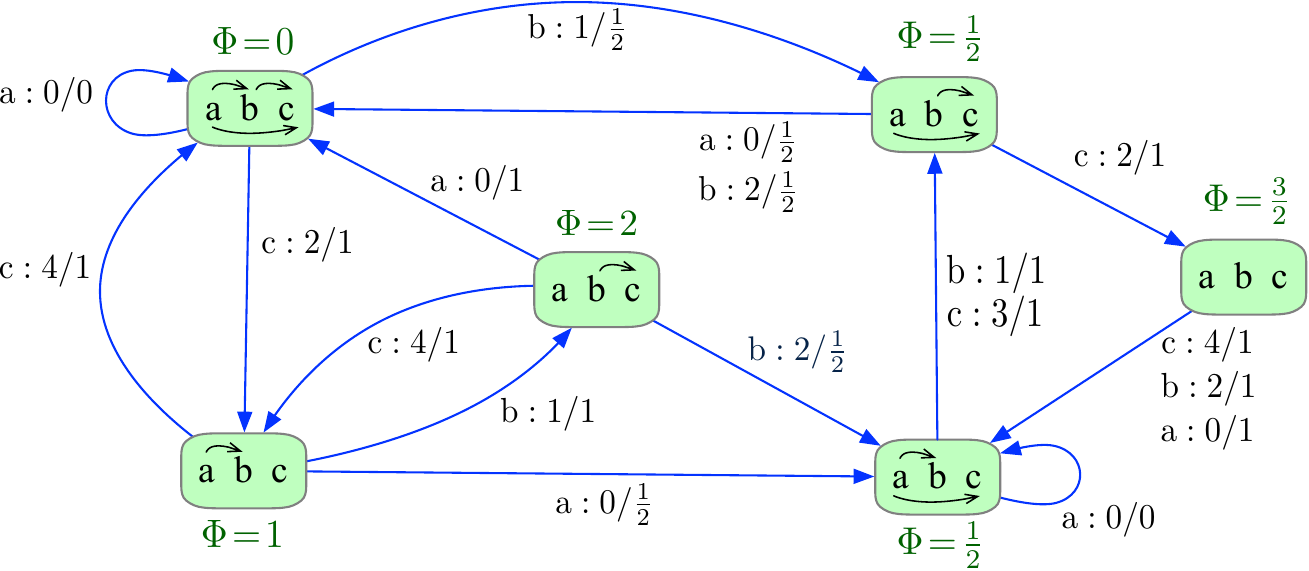}
\end{center}

The diagram also shows the potential values $\Phi$ for each state. By
routine verification, for each transition we have that the amortized
cost (that is, the cost of the algorithm plus the potential change) is at most $3$ times the optimum cost.


\section{Lower bound for the \textsc{Full-Or-Partial-Move} algorithm}
\label{sec:lower-fpm}

In this section, we show that for the choice of parameters
from~\autoref{lem:amortized_cost_P1}, the competitive ratio of \ALG is at least
$3.04$ in the partial cost model.

The sequence achieving this ratio is shown
in~\autoref{tab:pm_lower}. It uses five items (on a longer list one may simply 
use only five initial items). Graphs in the column
``work function on pairs'' depict modes of all the pairs. There, an arrow from $x$
to $y$ means that $W^{xy}(xy) + 1 = W^{xy}(yx)$  (which corresponds to 
mode $\alpha$ or $\gamma$,
depending on the order of the list). Some arrows are omitted for clarity:
the actual graph is the transitive closure of the graph shown.

Note that the final state of the list after the sequence is the same as the
state before the sequence, therefore it can be repeated indefinitely. The first
state described in the table is accessible from the starting state described
in~\autoref{sec:algorithm} by requesting item $d$ and then item~$a$.

\autoref{tab:pm_lower} compares cost of \ALG to the cost of \emph{pair-based} \OPT
defined as $P_{OPT}= \sum_{\set{x,y} \in \P} \Delta w^{xy}$, which for this
sequence is equal to $25$. We now show the behavior of an actual offline algorithm \OPT
that achieves this cost. Before the start of the sequence, the algorithm changes
its list to $cdeab$. This incurs an additional additive cost, which becomes
negligible if the sequence is repeated sufficiently many times. During the
entire sequence, \OPT performs two moves: after serving the first
request, it moves $c$ to the third position in the list, and after the ninth
request, it brings $c$ back to the front. Thus, it executes $4$ swaps.
It is easy to check that the cost of requests for paid by \OPT is $21$, and hence its 
total cost is~$25$.

\begin{table}
   \begin{tabular}[t]{|c|c|c|c|c|c|}
   \hline
      \multirow{2}{*}{\textbf{list of} \ALG} & \textbf{work function} & \multirow{2}{*}{\textbf{req.}} 
         & \textbf{move} & \textbf{cost} & \textbf{cost} \\ 
      & \textbf{on pairs} & & \textbf{of} \ALG & of \ALG & \textbf{of} $\OPT_P$ \\
      \hline\hline
      
      \multirow{2}{*}{\includegraphics[scale=0.8]{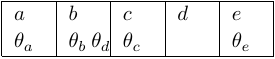}}
      & \multirow{2}{*}{\includegraphics[scale=0.22]{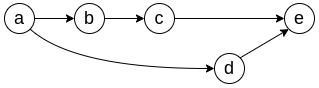}} 
      & & & & \\
      \cline{3-6}

      & & \multirow{2}{*}{$c$} & \multirow{2}{*}{$\theta_c$} & \multirow{2}{*}{2} & \multirow{2}{*}{$1\half$} \\
      \cline{1-2}
      
      \multirow{2}{*}{\includegraphics[scale=0.8]{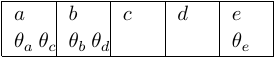}}
      & \multirow{2}{*}{\includegraphics[scale=0.22]{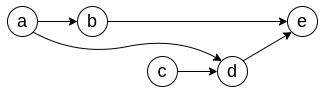}} 
      & & & & \\
      \cline{3-6}

      & & \multirow{2}{*}{$e$} & \multirow{2}{*}{$\theta_e$} & \multirow{2}{*}{6} & \multirow{2}{*}{$3\half$} \\
      \cline{1-2}
      
      \multirow{2}{*}{\includegraphics[scale=0.8]{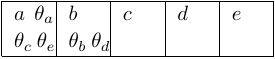}}
      & \multirow{2}{*}{\includegraphics[scale=0.22]{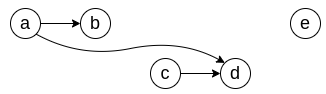}} 
      & & & & \\
      \cline{3-6}

      & & \multirow{2}{*}{$e$} & \multirow{2}{*}{\FRONT} & \multirow{2}{*}{14} & \multirow{2}{*}{$5\half$} \\
      \cline{1-2}
      
      \multirow{2}{*}{\includegraphics[scale=0.8]{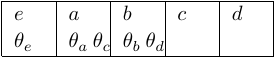}}
      & \multirow{2}{*}{\includegraphics[scale=0.22]{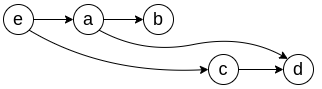}} 
      & & & & \\
      \cline{3-6}

      & & \multirow{2}{*}{$d$} & \multirow{2}{*}{$\theta_d$} & \multirow{2}{*}{20} & \multirow{2}{*}{$7\half$} \\
      \cline{1-2}
      
      \multirow{2}{*}{\includegraphics[scale=0.8]{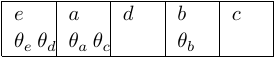}}
      & \multirow{2}{*}{\includegraphics[scale=0.22]{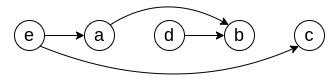}} 
      & & & & \\
      \cline{3-6}

      & & \multirow{2}{*}{$c$} & \multirow{2}{*}{\FRONT} & \multirow{2}{*}{28} & \multirow{2}{*}{$9\half$} \\
      \cline{1-2}
      
      \multirow{2}{*}{\includegraphics[scale=0.8]{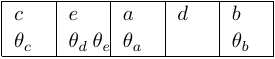}}
      & \multirow{2}{*}{\includegraphics[scale=0.2]{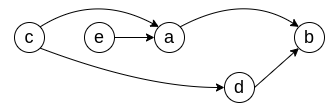}} 
      & & & & \\
      \cline{3-6}

      & & \multirow{2}{*}{$d$} & \multirow{2}{*}{$\theta_d$} & \multirow{2}{*}{33} & \multirow{2}{*}{11} \\
      \cline{1-2}
      
      \multirow{2}{*}{\includegraphics[scale=0.8]{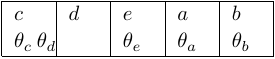}}
      & \multirow{2}{*}{\includegraphics[scale=0.22]{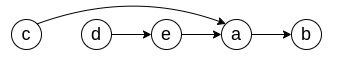}} 
      & & & & \\
      \cline{3-6}

      & & \multirow{2}{*}{$e$} & \multirow{2}{*}{$\theta_e$} & \multirow{2}{*}{35} & \multirow{2}{*}{12} \\
      \cline{1-2}
      
      \multirow{2}{*}{\includegraphics[scale=0.8]{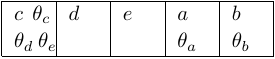}}
      & \multirow{2}{*}{\includegraphics[scale=0.22]{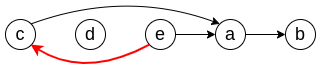}} 
      & & & & \\
      \cline{3-6}
      
      & & \multirow{2}{*}{$e$} & \multirow{2}{*}{\FRONT} & \multirow{2}{*}{39} & \multirow{2}{*}{$12\half$} \\
      \cline{1-2}
      
      \multirow{2}{*}{\includegraphics[scale=0.8]{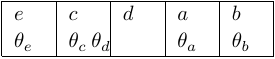}}
      & \multirow{2}{*}{\includegraphics[scale=0.22]{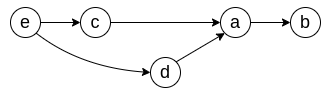}} 
      & & & & \\
      \cline{3-6}

      & & \multirow{2}{*}{$d$} & \multirow{2}{*}{$\theta_d$} & \multirow{2}{*}{42} & \multirow{2}{*}{$13\half$} \\ 
      \cline{1-2}
      
      \multirow{2}{*}{\includegraphics[scale=0.8]{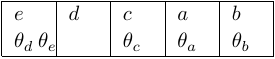}}
      & \multirow{2}{*}{\includegraphics[scale=0.22]{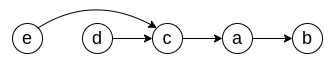}} 
      & & & & \\
      \cline{3-6}

      & & \multirow{2}{*}{$c$} & \multirow{2}{*}{$\theta_c$} & \multirow{2}{*}{44} & \multirow{2}{*}{$14\half$} \\ 
      \cline{1-2}
      
      \multirow{2}{*}{\includegraphics[scale=0.8]{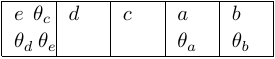}}
      & \multirow{2}{*}{\includegraphics[scale=0.22]{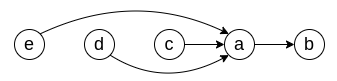}} 
      & & & & \\
      \cline{3-6}

      & & \multirow{2}{*}{$c$} & \multirow{2}{*}{\FRONT} & \multirow{2}{*}{48} & \multirow{2}{*}{$15\half$} \\ 
      \cline{1-2}
      
      \multirow{2}{*}{\includegraphics[scale=0.8]{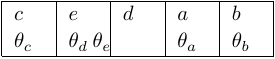}}
      & \multirow{2}{*}{\includegraphics[scale=0.22]{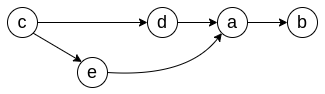}} 
      & & & & \\
      \cline{3-6}

      & & \multirow{2}{*}{$b$} & \multirow{2}{*}{$\theta_b$} & \multirow{2}{*}{52} & \multirow{2}{*}{$17\half$} \\ 
      \cline{1-2}
      
      \multirow{2}{*}{\includegraphics[scale=0.8]{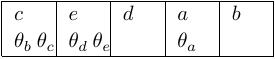}}
      & \multirow{2}{*}{\includegraphics[scale=0.22]{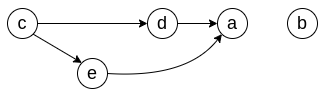}} 
      & & & & \\
      \cline{3-6}

      & & \multirow{2}{*}{$b$} & \multirow{2}{*}{\FRONT} & \multirow{2}{*}{60} & \multirow{2}{*}{$19\half$} \\ 
      \cline{1-2}
      
      \multirow{2}{*}{\includegraphics[scale=0.8]{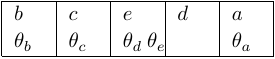}}
      & \multirow{2}{*}{\includegraphics[scale=0.22]{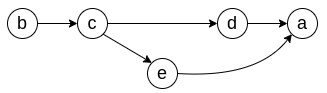}} 
      & & & & \\
      \cline{3-6}

      & & \multirow{2}{*}{$d$} & \multirow{2}{*}{$\theta_d$} & \multirow{2}{*}{64} & \multirow{2}{*}{21} \\ 
      \cline{1-2}
      
      \multirow{2}{*}{\includegraphics[scale=0.8]{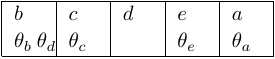}}
      & \multirow{2}{*}{\includegraphics[scale=0.22]{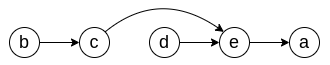}} 
      & & & & \\
      \cline{3-6}

      & & \multirow{2}{*}{$a$} & \multirow{2}{*}{$\theta_a$} & \multirow{2}{*}{68} & \multirow{2}{*}{23} \\ 
      \cline{1-2}
      
      \multirow{2}{*}{\includegraphics[scale=0.8]{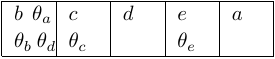}}
      & \multirow{2}{*}{\includegraphics[scale=0.22]{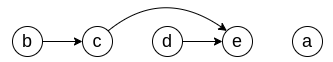}} 
      & & & & \\
      \cline{3-6}

      & & \multirow{2}{*}{$a$} & \multirow{2}{*}{\FRONT} & \multirow{2}{*}{76} & \multirow{2}{*}{25} \\ 
      \cline{1-2}
      
      \multirow{2}{*}{\includegraphics[scale=0.8]{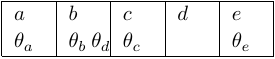}}
      & \multirow{2}{*}{\includegraphics[scale=0.22]{FPM_lb_pictures/graf0.png}} 
      & & & & \\
      \cline{3-6}
      & & & & & \\
      \hline
   \end{tabular}
    \caption{Sequence achieving ratio $3.04$ for \ALG. The red arrow indicates the pair in mode $\gamma$.}
    \label{tab:pm_lower}
  \end{table}

\section{Computational Results}
\label{sec:computational}

This section contains the description of our computational study of the
$\probLUPuniform$ problem. 
Our source code for reproducing the results is available at \cite{2025_github}. 

\subparagraph*{Note on the accuracy of our results.} Our computational results
serve as sources of insight into the $\probLUPuniform$ problem, and our code
produces graphs which can be checked for correctness, although the size of the
graph may be an obstacle. 

We therefore list our results as \emph{computational results}, not as theorems,
and advise the reader to be similarly cautious when interpreting them or citing
them in future work.

\subsection{Upper bound of 3 for list length up to 6}
\label{subsec:computational-upper-bound}

We start by discussing our results of a exhaustive computer search for the optimal competitive ratio for the
$\probLUPuniform$ model for small list lengths.

\begin{compresult}
There exists a $3$-competitive algorithm for $\probLUPuniform$
for list size $n \leq 6$.
\end{compresult}

\smallskip
Our computational method starts with a fixed competitive ratio $3$ and creates a bipartite graph.
In one part (we call it the \OPT part), we store all pairs $(\sigma, W)$,
where $\sigma$ is the current list of the algorithm and $W$ is a work function for this state.
(Recall that the work function~$W$ is the optimal cost of serving the input $\sigma$ and
ending in a given list state.)
Each vertex in this part has $n$ edges, corresponding to the new request $r$. Each edge leads
to a~vertex in the \AALG part of the graph, which stores a triple ($\sigma$, $W'$, $r$),
where $W'$ is the new work function after including the new request $r$ into the sequence.
In turn, each \AALG vertex has $n!$ edges, each leading to a different permutation $\sigma'$,
which we can interpret as the new list of the algorithm after serving the request $r$.

Edges of the graph have an associated cost with them, where the cost of edges leaving \AALG correspond
to the cost of the algorithm serving the request $r$ and the cost of edges leaving \OPT correspond
to the increase of the minimum of the work function.

Each vertex in this graph stores a number, which we call the \emph{potential} of the \AALG or \OPT vertex.
We initialize the potentials to all-zero values and then run an iterative procedure, inspired
by \cite{1991_chrobak_larmore_server_problem}, which iteratively computes a new minimum
potential in each \OPT vertex and a new maximum potential in each \AALG vertex.

If this procedure terminates, we have a valid potential value for each vertex in the graph,
and by restricting our algorithm to pick any tight edge leaving an \AALG vertex, we obtain an~algorithm
that is $3$-competitive. The data generated after this termination are similar in nature
to \autoref{tab:lb3}.

Our computational results correspond to our procedure terminating for list sizes $4 \leq n \leq 6$.

For list size $n=6$, it is no longer feasible to create a graph of all reachable work functions,
as our computations indicate that even listing all of these would require more memory
than our current computational resources allow. We therefore switch to the model of pair-based
\OPT of \autoref{sec:preliminaries}. As explained in that section, a $c$-competitive algorithm
for pair-based \OPT is also $c$-competitive against regular \OPT. However, the space of reachable
pair-based work functions is much smaller, so the setting of $n=6$ can be explored computationally
as well.

Note that we can also switch the results of $n = 4$ and $n = 5$ into the pair based setting,
which improves the running time. The computational results against the pair-based
OPT in those cases are consistent with our results against general \OPT.

\subsection{Lower bound for the work function algorithm class}
\label{subsec:computational-lower-bound}

We can translate the algorithms for \probLUPstandard from
\cite{1999_anderson_lup_work_function} to the \probLUPuniform model. For input
$\sigma$ to \probLUPuniform and a list $\pi$, the work function
$W^{\sigma}(\pi)$ defines the optimal cost of serving the input $\sigma$ and
ending with the list $\pi$. Then, a work function algorithm, given its current
state of the list $\mu$ and a new request $r$, it first serves the request and
then reorders the list to end in a state $\pi$ being a minimizer of 
$W^{\sigma,r}(\pi) + d(\pi, \mu)$, where $d(\pi, \mu)$ is the number of
pairwise swaps needed to switch from $\mu$ to $\pi$. The minimizer state is not
necessarily unique, hence we talk about the \emph{work function algorithm
class}.

Recall that some algorithms from
the work function algorithm class are $2$-competitive for \probLUPstandard~\cite{1999_anderson_lup_work_function},
which means they are optimal for that setting. In contrast, we show the following.

\begin{compresult}
The competitive ratio of all algorithms
in the work function algorithm class for \probLUPuniform is at least $3.1$ (for the partial cost model).
\end{compresult}

\smallskip

To show this claim, we construct a lower bound instance graph for list length $5$ where all
choices of the work function algorithm class are being considered. This graph is equivalent
to the graph construction of \autoref{subsec:computational-upper-bound}, with the out-edges
of the \AALG vertices restricted only to all valid choices of the work function algorithm class.

However, since the result of an iterative algorithm is that the potential does
not stabilize, we constructed a tailored approach for producing a lower bound
graph from the unstable potentials. A result of this code is then a standard
lower bound using an adversary/algorithm graph, with the adversary presenting a
single item in each of its vertices and the competitive ratio of any cycle in
the graph being at least $3.1$. 

We remark that this approach yields a $3$-competitive work function algorithm for 
\probLUPuniform for list lengths $3$ and $4$.


\end{document}